\newcommand{\keywords}[1]{\par\addvspace\baselineskip
\noindent\keywordname\enspace\ignorespaces#1}
\def\bs{\mbox{\rm bs}}
\def\bkey{\mbox{\rm bkey}}
\newcommand{\OR}{{\mathrm{OR}}}
\newcommand{\AND}{{\mathrm{AND}}}
\begin{document}

\mainmatter  % start of an individual contribution

% first the title is needed
\title{Minimal Indices for Successor Search}
\subtitle{[Full Version]}

% a short form should be given in case it is too long for the running head
\titlerunning{Minimal Indices for Successor Search}

% the name(s) of the author(s) follow(s) next
%
% NB: Chinese authors should write their first names(s) in front of
% their surnames. This ensures that the names appear correctly in
% the running heads and the author index.
%
\author{Sarel Cohen\and Amos Fiat\and Moshik Hershcovitch\and Haim Kaplan}
\authorrunning{Minimal Indices for Successor Search}
% (feature abused for this document to repeat the title also on left hand pages)

% the affiliations are given next; don't give your e-mail address
% unless you accept that it will be published
\institute{School of Computer Science, Tel Aviv University
\mailsa\\
\mailsb\\}

%
% NB: a more complex sample for affiliations and the mapping to the
% corresponding authors can be found in the file "llncs.dem"
% (search for the string "\mainmatter" where a contribution starts).
% "llncs.dem" accompanies the document class "llncs.cls".
%

\toctitle{Minimal Indices for Successor Search}
\tocauthor{Extended Abstract}
\maketitle

\begin{abstract}
We give a new successor data structure which improves upon the index
size of the P\v{a}tra\c{s}cu-Thorup data structures, reducing the
index size from $O(n w^{4/5})$ bits to $O(n \log w)$ bits, with
optimal probe complexity. Alternatively, our new data structure can be
viewed as matching the space complexity of the (probe-suboptimal)
$z$-fast trie of Belazzougui et al. Thus, we get the best of both
approaches with respect to both probe count and index size. The
penalty we pay is an extra $O(\log w)$ inter-register operations.
Our data structure can also be used to solve the weak prefix search
problem, the index size of $O(n \log w)$ bits is known to be optimal
for any such data structure.

The technical contributions include highly efficient single word
indices, with out-degree $w/\log w$ (compared to the $w^{1/5}$
out-degree of fusion tree based indices). To construct such high
efficiency single word indices we device highly efficient bit
selectors which, we believe, are of independent interest.
\keywords{Predecessor Search, Succinct Data Structures, Cell Probe
Model, Fusion Trees, Tries, Word RAM model}
\end{abstract}

\section{Introduction}

\label{sec:intro}

A fundamental problem in data structures is the successor problem:
given a RAM with $w$ bit word operations, and $n$ keys (each $w$
bits long), give a data structure that answers successor queries
efficiently. We distinguish between the space occupied by the $n$
input keys themselves, which is $O(nw)$ bits, and the additional
space requires by the data structure which we call the {\em index}.
The two other performance measures of the data structure which are
of main interest are how many accesses to memory (called {\em
probes}) it performs per query,  and  the query time or the total
number of machine operations performed per query, which could be
larger than the number of probes. We can further distinguish between
probes to the index and probes to the input keys themselves. The
motivation is that if the index is small and fits in cache probes to
the index would be cheaper. We focus  on constructing a data
structure for the successor problem that requires sublinear $o(nw)$
extra bits.

%Of particular interest are data structures with small index
%structures, sublinear in the input size, that can fit within a high
%speed cache.

The simplest successor data structure is a sorted list, this
requires no index, and performs  $O(\log n)$ probes and $O(\log n)$
operations per binary search. This high number of probes that are
widely dispersed can makes this solution inefficient for large data
sets.

Fusion trees of Fredman and Willard \cite{FredmanWillard} (see also
\cite{FredmanWillard2}) reduce the number of probes and time to
$O(\log_w n)$. A fusion tree node
 has outdegree $B=w^{1/5}$ and therefore fusion trees require
only  $O(nw/B) = O(n w^{4/5})$ extra bits.

Another famous data structure is the $y$-fast trie of Willard
\cite{Dan:1983}. It requires linear space ($O(nw)$ extra bits) and
$O(\log w)$ probes and time per query.

P\v{a}tra\c{s}cu and Thorup \cite{Patrascu:lower1} solve the
successor problem optimally (to within an $O(1)$ factor) for any
possible point along the probe count/space tradeoff, and for any
value of $n$ and $w$. However, they do not distinguish between the
space required to store the input and the extra space required for
the index. They consider only the total space which cannot be
sublinear.

P\v{a}tra\c{s}cu and Thorup's linear space data structure  for
successor search is an improvement of three previous data-structures
and achieves the following bounds.

\vspace*{-0.2cm}
\begin{enumerate}
\item
For values of $n$ such that $\log n \in [0, \frac{\log^2 w}{\log
\log w}]$ their data structure is a fusion tree and therefore the
query time is $O(\log_w n)$. This bound increases monotonically with
$n$.
\item For $n$ such that $\log n \in [\frac{\log^2 w}{\log \log w},
\sqrt{w}]$ their data structure is a generalization of the data
structure of Beame \& Fich \cite{Beame01optimalbounds} that is suitable for linear
space, and has the bound $O(\frac{\log w}{\log \log w - \log \log
\log n})$. This bound increases from $O(\frac{\log w}{ \log\log w})$
at the beginning of this range to $O(\log w)$ at the end of the
range.
\item For values of $n$ such that $\log n \in [\sqrt{w},w]$
their data structure is a slight improvement of the van Emde Boas'
(vEB) data structure \cite{DBLP:journals/ipl/Boas77} and has the bound of $O(\max \{ 1 ,\log
(\frac {w -\log n}{\log w} ) \})$. This bound decreases with $n$
from $O(\log w)$ to $O(1)$.
\end{enumerate}

\looseness=-1 A recent data structure of Belazzougui et
al.~\cite{Belazzougui:2009} called  the probabilistic $z$-fast trie,
 reduces the extra space requirement to $O(n \log w)$ bits, but requires a (suboptimal)
expected $O(\log w)$ probes (and $O(\log n)$ probes in the worst
case). See Table \ref{tab-compareoldnew} for a detailed comparison
between various data structures for the successor porblem with
respect to the space and probe parameters under consideration.
%See
%Figure \ref{fig-ptlowerbounds-full} for a graphical description of
%the minimal probe complexity as a function of $n,w$.

\begin{table}[htb]
\centering \scalebox{0.85}{
 {\begin{tabular}{|l|l|l|c|l|l|l|}
  \hline
  % after \\: \hline or \cline{col1-col2} \cline{col3-col4} ...
  Data Structure & Ref. &
  \begin{tabular}{l}
    Index
    size \\
    (in bits)
  \end{tabular}  & &
    \begin{tabular}{l}
    $\#$ Non-index \\
    Probes
  \end{tabular}&
     \begin{tabular}{l}
    Total \\
    $\#$ Probes
  \end{tabular} &
    $\#$ operations
  \\
  \hline
  Binary Search & & -- & & $O(\log n)$ & $O(\log n)$ & $O(\#
  \mathrm{probes})$ \\
\hline
    van Emde Boas  & \cite{DBLP:journals/ipl/Boas77} & $O(2^w)$  & &  $O(1)$&
  $O(\log w)$ & $O(\# \mathrm{probes})$ \\ \hline
  $x$-fast trie & \cite{Dan:1983} & $O(nw^2)$ & &  $O(1)$& $O(\log w)$
  & $O(\# \mathrm{probes})$\\ \hline
  $y$-fast trie & \cite{Dan:1983} & $O(nw)$ & &   $O(1)$& $O(\log w)$
  & $O(\# \mathrm{probes})$\\
  \hline
  \begin{tabular}{c}
    $x$-fast trie \\
    on ``splitters" \\
    poly($w$) apart
  \end{tabular}  & \begin{tabular}{c}
    Folklore
  \end{tabular} & $O(n/\mathrm{poly}(w))$  & &  $O(\log w)$&
  $O(\log w)$ & $O(\# \mathrm{probes})$ \\
  \hline
    Beame $\&$ Fich & \cite{Beame01optimalbounds} & $\Theta(n^{1+\epsilon} w)$ & &  $O(1)$& $
  O\left(\frac{\log w}{\log \log w}\right)$ & $O(\# \mathrm{probes})$ \\
  \hline
  Fusion Trees & \cite{FredmanWillard} & $O(n w^{4/5})$ & &  $O(1)$& $O\left(\frac{\log n}{\log w}\right)$
  & $O(\# \mathrm{probes})$\\    \hline
  $z$-fast trie & \cite{Belazzougui:2009,Belazzougui:2008,Belazzougui10}& $O(n \log w)$ &
  \begin{tabular}{c}
    exp.  \\ \hline
    w.c.
      \end{tabular} &
\begin{tabular}{l}
    $O(1)$ \\ \hline
   $O(\log n)$
      \end{tabular} &
  \begin{tabular}{l}
     $O(\log w)$ \\ \hline
     $O(\log n)$
      \end{tabular}  & $O(\# \mathrm{probes})$  \\
      \hline
    P\v{a}tra\c{s}cu $\&$ Thorup & \cite{Patrascu:lower1} & \begin{tabular}{l}
    $O(nw)$ or\\
    $O(nw^{4/5})$
      \end{tabular}  & &  $O(1)$&  \begin{tabular}{l}
     Optimal given\\
      linear space \\
%     See Figure \ref{fig-ptlowerbounds-full}
      \end{tabular} & $O(\# \mathrm{probes})$  \\
      \hline
     \begin{tabular}{l}
     P\v{a}tra\c{s}cu $\&$ Thorup \\
     + $\gamma$-nodes
      \end{tabular} & This Paper & $O(n \log w)$  & &  $O(1)$ & \begin{tabular}{l}
     Optimal given\\
      linear space \\
%     See Figure \ref{fig-ptlowerbounds-full}
      \end{tabular} & \begin{tabular}{l}
     $O(\# \mathrm{probes}$\\
     $ + \log w)$
      \end{tabular}
         \\
\hline
%    \begin{tabular}{l} FT $+$ $\gamma$-nodes\\
%    $\{x,y\}$-tries $+$ $\gamma$-nodes\\
%    P $\&$ T $+$ $\gamma$-nodes
%    \end{tabular}& This paper & $O(n \log w)$ & &   $O(1)$&
%    As above $+$ $O(1)$  & \begin{tabular}{l} $O(\# \mathrm{probes}+$\\
%    $\log w)$\\
%    \end{tabular}\\
%    \hline
   \end{tabular}}}

\caption{Requirements of various data structures for the successor
problem. The word length is $w$ and the number of keys is $n$.
Indexing groups of $w/\log w$ consecutive keys with our new word
indices we can reduce the space of any of the linear space data
structures above to $O(n\log w)$ bits while keeping the number of
probes the same and increasing the query time by $O(\log w)$.}
\label{tab-compareoldnew}
 \vspace*{-0.7cm}
\end{table}

Consider the following multilevel scheme to reduce index size:  (a)
partition the keys into consecutive sets of $w^{1/5}$ keys, (b)
build a Fusion tree index structure for each such set (one $w$ bit
word), and (c) index the smallest key in every such group  using any
linear space data structure. The number of fusion tree nodes that we
need $n/w^{1/5}$ and the total space required for these nodes and
the data structure that is indexing them is $O(n w^{4/5})$.

This standard bucketing trick shows that we can get indices of
smaller size by constructing  a ``fusion tree node'' of larger
outdegree. That is we seek a data structure, which we refer to as a
{\em word-index}, that by using $O(1)$ words can answer successor
queries with respect to as many keys as possible.

Our main contribution is such a word index that can handle $w/\log
w$ keys (rather than $w^{1/5}$ for fusion trees).\footnote{The
$w/\log w$ keys take more than $O(1)$ words but are not considered
part of the word index.} However, this new highly compact index
requires $\Theta(\log w)$ operations per search (versus the $O(1)$
operations required by Fusion trees).

Using these word indices we obtain, as described  above, a
(deterministic) data structure that, for any $n$, $w$, answers a
successor query with an optimal number of probes (within an $O(1)$
factor), and requires only $O(n \log w)$ extra bits. We remark that
we only probe $O(1)$ non-index words (which is true of
P\v{a}tra\c{s}cu-Thorup data structures as well, with minor
modifications). The penalty we pay is an additional $O(\log w)$ in
the time complexity.

Indices of small size are particularly motivated today by the
multicore shared memory architectures abundant
today~\cite{Drepper07whatevery,Shavit}. When multiple cores access
shared cache/memory, contention arises. Such contention is deviously
problematic because it may cause serialization of memory accesses,
making a mockery of multicore parallelism. Multiple memory banks and
other hardware are attempts to deal with such problems, to various
degrees. Thus, the goals of reducing the index size, so it fits
better in fast caches, reducing the number of probes extraneous to
the index, and the number of probes within the index, become
critical.
%
%Our figures are in the appendix due to lack of space. The text is
%self-contained and can be read without them.
\section{High level overview of our results and their implications}
\noindent {\bf Computation model:} We assume
 a RAM model of computation with $w$ bits per word.
A key (or query) is one word ($w$ bits long).
 We can operate on the registers using at least a basic instruction set consisting of (as defined in \cite{DBLP:conf/wads/BrodnikMM97}): Direct and
indirect addressing, conditional jump, and a number of {\sl
inter-register operations}, including addition, subtraction, bitwise
Boolean operations and left and right shifts. All operations are
unit cost. One of our construction does not require multiplication.

\smallskip

 We give three
variants of high outdegree single word indices which we call
$\alpha$ nodes, $\beta$ nodes, and $\gamma$ nodes. Each of these
structures index $w/\log w$ keys and answer successor queries using
only $O(1)$ $w$-bit words, $O(\log w)$ time, and $O(1)$ extra-index
probes (in expectation for $\alpha$ and $\beta$ nodes, worst case
for $\gamma$ nodes) to get at most two of the $w/\log w$ keys.

\looseness=-1 The $\alpha$ node is simply a $z$-fast trie (\cite{Belazzougui:2009}) applied to $w/ \log w$
keys. Given the small number of keys, the $z$-fast trie can be
simplified. A major component of the $z$-fast trie involves
translating a prefix match (between a query and a trie node) to the
query rank. As there are only $w/\log w$ keys involved, we can
discard this part of the $z$-fast trie and store ranks explicitly in
$O(1)$ words.

Based on a different set of ideas, $\beta$ nodes are arguably
simpler than the $z$-fast trie, and have the same performance as the
$\alpha$ nodes. As $\beta$-nodes are not our penultimate construction, the full description of $\beta$-nodes is in Appendix \ref{sec:beta-full}.

Our penultimate variant, $\gamma$ nodes, has the advantages that it is
deterministic and gives worst case $O(1)$ non-index probes, and,
furthermore, requires no multiplication.

To get the $\gamma$ nodes we introduce highly efficient
bit-selectors (see section \ref{subsec:bitextract}) that may be of
independent interest. Essentially, a bit-selector selects a multiset
of bits from a binary input string and outputs a rearrangement of
these bits within a shorter output string.

 Thorup
\cite{thorup03}
 proved that it is impossible to have
$O(1)$ time successor search  in a ``standard AC(0) model", for any
non-constant number of keys, unless one uses enormous space,
$2^{\Omega(w)}$, where $w$ is the number of bits per word. This
means that it would be impossible to derive an improved
$\gamma$-node (or Fusion tree node) with $O(1)$ time successor
search in the ``standard AC(0) model''.

\subsection{Succinct successor data structure}

As mentioned in the introduction  we obtain using our word indices a
successor data structure that requires $O(n\log w)$ bits in addition
to the input keys. The idea is standard and simple: We divide the
keys into consecutive chunks of size $w/\log w$ keys each. We index
each chunk with one of our word indices and index the chunks (that
is the first key in each chunk) using another linear space data
structure. This has the following consequences depending upon the
linear space data structure which we use to index the chunks. (We
henceforth refer to our $\gamma$-nodes, but similar results can be
obtained using either $\alpha$ or $\beta$ nodes in expectation.)

\textbf{Fusion Trees + $\gamma$-nodes:} This data structure answers
successor queries with $O(\log_w n)$ probes, and  $O(\log_w n + \log
w)$ time.

\textbf{The optimal structure of P\v{a}tra\c{s}cu $\&$ Thorup  +
$\gamma$-nodes:} Here the  number of probes to answer a query is
optimal, the time is $O(\# probes + \log w)$.
%The cell probe lower
%bounds of P\v{a}tra\c{s}cu $\&$ Thorup  are illustrated  in Figure
%\ref{fig-ptlowerbounds-full}.

\textbf{$y$-fast-trie + $\gamma$-nodes:} This gives an improvement
upon the recently introduced [probabilistic] $z$-fast-trie,
\cite{Belazzougui:2009,Belazzougui:2008} (we omit the
``probabilistic" prefix hereinafter). The worst-case probes and query time
improves from $O(\log n)$ to $O(1)$ probes and $O(\log w)$ query time, and the data structure
is deterministic.

\noindent {\bf The weak prefix search problem:} In this problem the
query is as follows. Given a  bit-string $p$, such that $p$ is the
prefix of at least one key among the $n$ input keys, return the
range of ranks of those  input keys having $p$ as a prefix.

It is easy to modify the index of our successor data structures to
 a new data structure for ``weak prefix search". We construct
a word $x$ containing the query
 $p$ padded to the right with trailing zeros, and a word
 $y$ containing the query
 $p$ padded to the right with trailing ones.
 Searching for the rank of the successor of $x$ in $S$
and the rank of the predecessor of $y$ in $S$ gives the required
range.

We note that we can carry out the search of the successor of $x$ and
the predecessor of $y$ without accessing the keys indexed by the
$\gamma$ nodes. As we will see, our $\gamma$ nodes implement a
succinct  blind tree. Searching a blind trie for the right rank of
the successor typically requires accessing one of the indexed keys.
But, as implicitly used in \cite{Belazzougui10}, this access can be
avoided if the query is a padded prefix of an indexed key such as
$x$ and $y$ above. This implies that the keys indexed by the
$\gamma$ nodes can in fact be discarded and not stored at all. We
get a data structure of overall size $O(n\log w)$ bits for weak
prefix search.

Belazzougui et {\sl al.}, \cite{Belazzougui10}, show that any data
structure supporting ``weak prefix search'' must have size $\Omega(n
\log w)$ bits. Hence, our index size is optimal for this related
problem.

\subsection{Introducing Bit-Selectors and Building a $(k,k)$-Bit Selector} \label{subsec:bitextract}

 To construct the $\gamma$-nodes we define and construct bit selectors as follows.
 A $(k,L)$ bit-selector, $1 \leq k \leq L \leq w$, consists
of a preprocessing phase and a query phase, (see Figure
\ref{fig-extractor1}):

\begin{itemize}
\item  The preprocessing phase: The input is a sequence of length $k$ (with repetitions),
$$I= I[1], I[2], \ldots, I[k],$$
where $0\leq I[j] \leq w-1$ for all
$j=1,\ldots,k$. Given $I$, we compute the following: \begin{itemize}
\item A sequence of $k$ strictly increasing indices, \\$0 \leq j_1
< j_2 < \cdots < j_k \leq L-1$, and,
\item An $O(1)$ word data structure, $D(I)$.
\end{itemize}
\item The query phase: given an input word $x$, and using $D(I)$, produces an
output word $y$ such that
\begin{eqnarray*} y_{j_\ell} &=& x_{I[\ell]}, \qquad 1 \leq \ell \leq k, \\
y_m &=&0, \qquad m\in \{0,\ldots w-1\} - \{j_\ell\}_{\ell=1}^k.
\end{eqnarray*}
\end{itemize}

One main technically difficult result is a construct for $(k, k)$
bit-selectors for all $1 \leq k \leq w/\log w$ (Section
\ref{sec:extract}). The bit selector query time is $O(\log w)$,
while the probe complexity and space are constant.

With respect to upper bounds, Brodnik, Miltersen, and Munro
\cite{DBLP:conf/wads/BrodnikMM97}, give several bit manipulation
primitives, similar to some of the components we use for
bit-selection, but the setting  of \cite{DBLP:conf/wads/BrodnikMM97}
is somewhat different, and there is no attempt to optimize criteria
such as memory probes and index size. The use of Benes networks to
represent permutations also appears in Munro et. al
\cite{DBLP:conf/icalp/MunroRRR03}.

Note that, for $(k,k)$-bit-selectors, it must be that $j_\ell =
\ell-1$, $1 \leq \ell \leq k$, independently of $I$. For a sequence
of indices $I$, we define $x[I]$ to be the bits of $x$ in these
positions (ordered as in $I$), if $I$ has multiplicities then $x[I]$
also has multiplicities. With this notation a $(k,k)$ bit selector
$D(I)$ computes $x[I]$ for a query $x$ in $O(\log w)$ time.

\looseness=-1 A $(w^{1/5},w^{4/5})$ bit-selector is implicit in fusion trees and lie
at the core of the data structure. Figure \ref{fig-compareext}
compares the fusion tree bit-selector with our construction.

\begin{figure*}[ht]
\centering {
\begin{tabular}{|l|l|c|c|c|c|}
  \hline
  % after \\: \hline or \cline{col1-col2} \cline{col3-col4} ...
    & $k$ & $L$ & $|D(I)|$ in words & $\#$ Operations & Multiplication?\\
  \hline
  \begin{tabular}{c}
    % after \\: \hline or \cline{col1-col2} \cline{col3-col4} ...
    Fusion tree \\
    bit-selector
  \end{tabular}  & $1\leq k\leq w^{1/5}$ & $k^4$ & $O(1)$ & $O(1)$ & Yes \\
  \hline
  Our bit-selector & $1 \leq k\leq w/\log w$ & $k$ & $O(1)$ & $O(\log w)$ & No \\
  \hline
\end{tabular}}
\caption{The bit-selector used for Fusion Trees in
\cite{FredmanWillard,FredmanWillard2} {\sl vs.} our bit-selector.}
\label{fig-compareext}
\end{figure*}

We remark that  Andersson, Miltersen, and Thorup
\cite{Andersson:ac0} give an AC(0) implementation of fusion trees,
{\sl i.e.}, they use special purpose hardware to implement a $(k,k)$
bit-selector (that produces a sketch of length $k$ containing $k$
bits of the key). Ignoring other difficulties, computing a [perfect]
sketch in AC(0) is easy: just lead wires connecting the source bits
to the target bits. With this interpretation, our bit-selector is a
software implementation in $O(\log w)$ time that implements the
special purpose hardware implementation of \cite{Andersson:ac0}.

Our bit-selectors are  optimal with respect to query time, when
considering implementation on a ``practical RAM'' (no multiplication
is allowed) as defined by
 Miltersen \cite{Miltersen96}.
 This follows from Brodnik
et {\sl al.} \cite{DBLP:conf/wads/BrodnikMM97} (Theorem 17) who
prove that in the ``practical  RAM'' model, any
$(k,k)$-bit-selector, with $k\geq \log^{10} w$, requires at least
$\Omega(\log k)$ time per bit-selector query. (Observe that the
bit-reversal of Theorem 17 in \cite{DBLP:conf/wads/BrodnikMM97} is a
special case of bit-selection).

\section{Bit Selectors}
\label{sec:extract}

In this section we describe both the preprocessing and selection
operations for our bit-selectors. We sketch the selection process,
which makes use of $D(I)$, the output of the preprocessing. A more
extensive description and figures can be found in the appendix,
Section \ref{sec:extract-full}.

$D(I)$ consists of $O(1)$ words and includes precomputed constants
used during the selection process. As $D(I)$ is $O(1)$ words, we
assume that $D(I)$ is loaded into registers at the start of the
selection process. Also, the total working memory required
throughout the selection is $O(1)$ words, all of whom we assume to
reside within the set of registers.

Partition the sequence $\sigma= 0,1, \ldots,w-1$ into $w/\log w $
blocks (consecutive, disjoint, subsequences of $\sigma$), each of
length $\log w$. Let $B_j$ denote the $j$th block of a word, {\sl
i.e.}, $B_j = j\log w, j\log w + 1, \ldots, (j+1)\log w -1 $,  $0
\le j \le w/\log w -1$.

Given an input word $x$ and the precomputed $D(I)$, the selection
process goes through the seven phases sketched below.

In this high level explanation we give an example input using the following parameters:
The word length $w=16$ bits, a bit index requires $\log w = 4$ bits, $I$ consists of $w/\log w = 4$ indices (with repetitions). A ``block" consists of $\log w=4$ bits, and there are $w/\log w = 4$ blocks.

As a running example let the input word be $x={\bf 1}  0  0  0 \
1 1  0  1 \    1  1  1  0 \    {\bf 0}  0  1  {\bf 1}$ and let
$I=<0,15, 12, 15>$, the required output is $x[I]=1101$.

%\begin{description}
%  \item[Phase 1:]
\noindent{\bf Phase 0:}
Zero irrelevant bits.  We take the mask $M$ with ones at positions in $I$, and set $x=x \ \AND\ M$. For our example this gives
\\Input \ \ : $M={\bf 1}  0  0  0  \   0  0  0  0 \    0  0  0  0  \   {\bf 1}  0  0  {\bf 1}$, $x={\bf 1}  0  0  0 \    1  1  0  1 \    1  1  1  0 \    {\bf 0}  0  1  {\bf 1}$;
\\Phase 0: $M={\bf 1}  0  0  0  \   0  0  0  0 \    0  0  0  0  \   {\bf 1}  0  0  {\bf 1}$, $x={\bf 1}  0  0  0\     0  0  0  0\     0  0  0  0 \    {\bf 0}  0  0  {\bf 1}$.
\par
\noindent{\bf Phase 1:}
  \looseness=-1 Packing blocks to the Left: All bits of $x$ whose
  index belongs to some block are shifted to the left within the
  block. We modify the mask $M$ accordingly. Let the number of such bits in block $j$ be $b_j$.
  This phase transforms $M$ and $x$ as follows:
\\Phase 0: $M={\bf 1}  0  0  0  \   0  0  0  0 \    0  0  0  0  \   {\bf 1}  0  0  {\bf 1}$, $x={\bf 1}  0  0  0\     0  0  0  0\     0  0  0  0 \    {\bf 0}  0  0  {\bf 1}$;
\\Phase 1: $M={\bf 1}  0  0  0 \    0  0  0  0  \   0  0  0  0 \    {\bf 1}  {\bf 1}  0  0$, $x = {\bf 1}  0  0  0 \    0  0  0  0  \   0  0  0  0 \    {\bf 0}  {\bf 1}  0  0$;
\\Note that $b_0=1$, $b_1=b_2=0$, and $b_3=2$. Phase 1 requires $O(\log w)$  operations on a constant number of words (or registers).
%See Appendix for full details.
  \par
%  \item[Phase 2:]
\noindent{\bf Phase 2:}
  Sorting Blocks  in descending order of $b_j$
  (defined in Phase 1 above).
This phase transforms $M$ and $x$ as follows:
\\   Phase 1: $M={\bf 1}  0  0  0 \    0  0  0  0  \   0  0  0  0 \    {\bf 1}  {\bf 1}  0  0$, $x = {\bf 1}  0  0  0 \    0  0  0  0  \   0  0  0  0 \   {\bf 0}  {\bf 1}  0  0$;
\\   Phase 2: $M={\bf 1}  {\bf 1}  0  0  \   {\bf 1}  0  0  0  \   0  0  0  0  \   0  0  0  0$, $x =  {\bf 0}  {\bf 1}  0  0  \   {\bf 1}  0  0  0  \   0  0  0  0   \  0  0  0  0$;
\\
   Technically, phase 2 uses a Benes network to sort the blocks in descending order of $b_j$, in our running example this means block 3 should come first, then block 0, then blocks 2 and 3 in arbitrary order. Brodnik, Miltersen, and Munro \cite{DBLP:conf/wads/BrodnikMM97} show how to simulate a Benes network on bits of a word, we extend this so as to sort entire blocks of $\log w$ bits.
   \\The precomputed $D(I)$ includes $O(1)$ words to encode this Benes network. Phase 2 requires $O(\log w)$ bit operations on $O(1)$ words.
%   See Appendix for full details.
   \par
%  \item[Phase 3:]
\noindent{\bf Phase 3:}
  Dispersing bits:
   reorganize the word produced in Phase 2 so that each
  of the different bits whose index is in $I$ will occupy the
  leftmost
  bit of a unique block. As there may be less distinct indices in
  $I$ than blocks, some of the blocks may be empty, and these will
  be the rightmost blocks. This process requires $O(\log w)$ word
  operations to reposition the bits.
This phase transforms $M$ and $x$ as follows:
\\  Phase 2: $M={\bf 1}  {\bf 1}  0  0  \   {\bf 1}  0  0  0  \   0  0  0  0  \   0  0  0  0$, $x =  {\bf 0}  {\bf 1}  0  0  \   {\bf 1}  0  0  0  \   0  0  0  0   \  0  0  0  0$;
\\  Phase 3: $M={\bf 1}  0  0  0\     {\bf 1}  0  0  0  \   {\bf 1}  0  0  0 \    0  0  0  0$, $x =  {\bf 0}  0  0  0  \   {\bf 1}  0  0  0  \   {\bf 1}  0  0  0 \    0  0  0  0$;
 \par
%  \item[Phase 4:]
\noindent{\bf Phase 4:} Packing bits. The
  goal now is to move the bits positioned by Phase 3 at the leftmost
  bits of the leftmost $r$ blocks ($r$ being the number of indices in $I$ without repetitions). Again, by appropriate bit
  manipulation, this can be done with $O(\log w)$ word operations (see appendix).
  This phase transforms $M$ and $x$ as follows:
\\  Phase 3: $M={\bf 1}  0  0  0\     {\bf 1}  0  0  0  \   {\bf 1}  0  0  0 \    0  0  0  0$, $x =  {\bf 0}  0  0  0  \   {\bf 1}  0  0  0  \   {\bf 1}  0  0  0 \    0  0  0  0$;
\\Phase 4: $M={\bf 1}  {\bf 1}  {\bf 1}  0  \   0  0  0  0  \   0  0  0  0  \   0  0  0  0$, $x={\bf 0}  {\bf 1}  {\bf 1}  0  \   0  0  0  0  \   0  0  0  0 \    0  0  0
0$;\\
\looseness=-1 We remark that if $r=k$, {\sl i.e.}, if $I$
  contains no duplicate indices, then we can skip Phases 5 and 6
  whose purpose is to duplicate those bits required several times in
  $I$.
 \par
%  \item[Phase 5:]
\noindent{\bf Phase 5:}
  Spacing the bits.
  Once again, we simulate a Benes network on the $k$ leftmost bits.
  The purpose of this permutation is to space out and rearrange the
  bits so that bits who appear multiple times in $I$ are placed so
  that multiple copies can be made. \\\looseness=-1 In our running example, phase 5 changes neither $M$ nor $x$, but this is coincidental -- for other inputs ($I' \neq I$) phase 5 would not be the identity function. Phase 5 is yet another application of a Benes network and requires $O(\log w)$ word operations. \par
%  \item[Phase 6:]
\noindent{\bf Phase 6:}
   Duplicating bits - we duplicate the bits for which space was prepared
  during Phase 5.
    This phase transforms $M$ and $x$ as follows:
\\Phase 5: $M={\bf 1}  {\bf 1}  {\bf 1}  0  \   0  0  0  0  \   0  0  0  0  \   0  0  0  0$, $x={\bf 0}  {\bf 1}  {\bf 1}  0  \   0  0  0  0  \   0  0  0  0 \    0  0  0  0$;
\\Phase 6: $M= {\bf 1}  {\bf 1}  {\bf 1}  {\bf 1} \    0  0  0  0  \   0  0  0  0  \   0  0  0  0$, $x={\bf 0}  {\bf 1}  {\bf 1}  {\bf 1}  \   0  0  0  0   \  0  0  0  0  \   0  0  0  0$;
\\
 Technically, phase 6 makes use of shift and $\OR$ operations, where the shifts are decreasing powers of two.
% See details in the appendix.
\par
%  \item[Phase 7:]
\noindent{\bf Phase 7:}
  \looseness=-1 Final positioning: The bits are all now
  in the $k$ leftmost positions of a word, every bit
  appears the same number of times it's index appears in $I$, and we
  need to run one last Benes network simulation so as to permute
  these $k$ bits. This permutation gives the final outcome.
  This phase transforms $M$ and $x$ as follows:
\\Phase 6: $M= {\bf 1}  {\bf 1}  {\bf 1}  {\bf 1} \    0  0  0  0  \   0  0  0  0  \   0  0  0  0$, $x={\bf 0}  {\bf 1}  {\bf 1}  {\bf 1}  \   0  0  0  0   \  0  0  0  0  \   0  0  0  0$;
\\Phase 7: $M= {\bf 1}  {\bf 1}  {\bf 1}  {\bf 1} \    0  0  0  0  \   0  0  0  0  \   0  0  0  0$, $x={\bf 1}  {\bf 1}  {\bf 0}  {\bf 1}  \   0  0  0  0  \   0  0  0  0   \  0  0  0  0$;
\\
  \looseness=-1 Note the leftmost $|I|=w/\log w = 4$ bits of $x$ contain the required output of the bit selector.\par
%\end{description}

\section{$\gamma$-nodes}
\label{sec-gamma}
 In this section we use the $(w/\log w, w/\log
w)-$bit-selector, described above, to build a {\em $\gamma$-node}
defined as follows.

\begin{definition}
A $\gamma$-node answers successor queries over a static set $S$ of
at most  $w/\log w$  $w$-bit keys. The $\gamma$-node uses a compact
index of $O(1)$ $w$-bit words, in addition to the input $S$.
Successor queries perform $O(1)$ word probes, and $O(\log w)$
operations.
\end{definition}

We describe the $\gamma$-node data structure in stages, beginning
with a {\em slow $\gamma$-node} below. A slow $\gamma$-node is
defined as a $\gamma$-node but performs $O(w/\log w)$ operations
rather than $O(\log w)$.

\subsection{Construction of Slow $\gamma$-nodes}

We build a blind trie over the set of keys $S= y_1 < y_2,\ldots,<
y_k$, $k \le w/\log w$. We denote this trie by $T(S)$.
  The trie $T(S)$ is a full binary tree with $k$ leaves, each corresponds
to a key, and $k-1$ internal nodes. (We do not think of the keys as
part of the trie.) We store $T(S)$ in $O(1)$ $w$-bit words. (The
keys, of course require $|S|w$ bits.) $T(S)$ has the following
structure:

\begin{enumerate} \item Each internal node of $T(S)$ has pointers
to its left and right children.
\item An internal node $u$ includes
a bit index, $i_u$, in the range $0,\ldots,w-1$, $i_u$ is the length
of the longest common prefix of the keys associated with the leaves
in the subtree rooted at $u$.
 \item
Key $y_i$ corresponds to the $i$th leaf from left to right. We store
$i$ in this leaf and denote this leaf by $\ell(y_i)$.
\item
Keys associated with descendants of the left-child of $u$ have bit
$i_u$ equals to zero. Analogously, keys associated with descendants
of the right-child of $u$ have bit $i_u$ equals to one.
\end{enumerate}

In addition to $T(S)$, we assume that the keys in $S$ are stored in
memory, consecutively in sorted order.

Indices both in internal nodes and leaves are in the range
$0,\ldots, w-1$ and thereby require $O(\log w)$ bits. Since $T(S)$
has $O(w/\log w)$ nodes, a pointer to a node also
 requires $O(\log w)$ bits.
Thus, in total, each  node in $T(S)$ requires only $O(\log w)$ bits.
It follows that $T(S)$ (internal nodes and leaves) requires only
$O(w)$ bits (or, equivalently, can be packed into $O(1)$ words).

Fundamentally, a \emph{blind-search} follows a root to leaf path in
blind trie $T(S)$, ignoring intermediate bits. Searching $T(S)$ for
a query $x$ always ends at leaf of the trie (which contains the
index of some key). Let $\bs(x,S)$ denote the index stored at this
leaf, and let $\bkey(x)$ be $y_{\bs(x,S)}$. {\sl I.e.}, blind search
for query $x$ in $T(S)$ leads to a leaf that points to $\bkey(x)$.
In general, $\bkey(x)$ is {\sl not} the answer to the successor
query, but it does have the longest common prefix of $x$ amongst all
keys in $S$. (See \cite{Ferragina:1999}.)

To arrive at the successor of $x$, we retrieve $\bkey(x)$ and
compute its longest common prefix with $x$. Let $b$ be the next bit
of $x$, after $\mathrm{LCP}(x,\bkey(x))$. We use $b$ to pad the
remaining bits, let $\|$ denote concatenation, and let
$$z=\mathrm{LCP}(x,\bkey(x)) \| b^{w-|\mathrm{LCP}(x,\bkey(x))|}.$$ Finally, we perform
a second blind-search on $z$. The result of this second search gives
us the index of the successor to $x$ to within $\pm 1$.

Overall, the number of probes required for such a search is $O(1)$.
However, the computation time is equal to the length of the longest
root to leaf path in $T(S)$, which is $O(w/\log w)$.

\subsection{Improving the running time}

 Using our $(w/\log w, w/\log w)$-bit-selector
  we can reduce the search time in the blind trie from $O(w/\log w)$ to $O(\log w)$ operations while still representing
  the trie in $O(1)$  words. For that we change the first part of the query, that is the \emph{blind-search}
  for $\bs(x,S)$  (the index of $\bkey(x)$). Rather than walking top down along a path in the trie
  we use a binary search as follows.

We need the following notation. Any node $u\in T(S)$, internal node
or leaf, defines a unique root to $u$ path in $T(S)$. Denote this
path by $\pi_u = v_0, v_1, \ldots, v_{|\pi_u|}$ where $v_0$ is the
root, $v_{|\pi_u|}=u$, and $v_i$ is the parent of $v_{i+1}$.
%Path $\pi_{u}$ also determines a
%sequence of $|\pi_{u}|-1$ indices.
For any node $u\in T(S)$ let $I_u$ be the sequence of indices $i_v$
for all internal nodes $v$ along $\pi_u$. Also, let $\zeta_u$ be a
sequence of zeros and ones, one entry per edge in $\pi_u$, zero for
an edge pointing left, one otherwise. For all $1 \leq q \leq |S|$ we
define $\pi_q= \pi_{\ell(y_q)}$, $ I_q = I_{\ell(y_q)}$, and
$\zeta_q = \zeta_{\ell(y_q)}$. The following lemma is
straightforward.

\begin{lemma} \label{lemma:mon}
For any index $1\leq q \leq |S|$, query $x$, we have that
  \begin{eqnarray*}
\zeta_{q} \mbox{\rm\ is lexicographically smaller than\ } x[I_q]
&\Rightarrow&
y_q < \bkey(x)\\
\zeta_q = x[I_q] &\Rightarrow&  y_q=\bkey(x), \\
\zeta_q \mbox{\rm\ is lexicographically larger than\ } x[I_q]
&\Rightarrow&  y_q > \bkey(x).\end{eqnarray*}
\end{lemma}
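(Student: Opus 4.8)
The plan is to compare, for a fixed query $x$ and index $q$, the blind-search path that $x$ traces in $T(S)$ against the fixed root-to-leaf path $\pi_q$ that reaches $\ell(y_q)$, and to locate the first edge at which the two paths diverge. Recall that the blind search for $x$ starts at the root and, at each internal node $v$ it visits, descends to the left child if $x_{i_v}=0$ and to the right child if $x_{i_v}=1$, terminating at the leaf $\ell(\bkey(x))$. Writing $\pi_q=v_0,\ldots,v_{|\pi_q|}$ with $v_{|\pi_q|}=\ell(y_q)$, the entries of the two length-$|\pi_q|$ strings being compared are exactly the relevant quantities along $\pi_q$: the $j$-th entry of $\zeta_q$ is the direction ($0$ for left, $1$ for right) that $\pi_q$ takes out of the internal node $v_j$, while the $j$-th entry of $x[I_q]$ is $x_{i_{v_j}}$, i.e. the direction the blind search would take at $v_j$.

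The first fact I would establish is that agreement of $\zeta_q$ and $x[I_q]$ on a prefix of length $t$ forces the blind search to follow $\pi_q$ through $v_0,\ldots,v_t$. This is a short induction: the blind search begins at $v_0$; and if it has reached $v_j$ (which lies on $\pi_q$, so $i_{v_j}$ is the $j$-th entry of $I_q$), then its next step is governed by $x_{i_{v_j}}=x[I_q][j]=\zeta_q[j]$, which is precisely the direction $\pi_q$ takes, so it proceeds to $v_{j+1}$. In particular, if $\zeta_q=x[I_q]$ the blind search traverses all of $\pi_q$ and halts at $\ell(y_q)$, giving $\bkey(x)=y_q$; this settles the middle implication.

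For the two remaining implications I would let $t$ be the first coordinate at which $\zeta_q$ and $x[I_q]$ differ. By the previous paragraph both paths reach the internal node $v_t$, and there the blind search turns into the subtree rooted at the child of $v_t$ that does not contain $\ell(y_q)$; thus $y_q$ and $\bkey(x)$ lie in the two different child-subtrees of $v_t$. The decisive ingredient is the order property of the blind trie: at any internal node $u$, every key in the left subtree is smaller than every key in the right subtree. This follows from items (2)--(4) of the construction: all keys below $u$ share the common prefix of length $i_u$, left-subtree keys carry $0$ in position $i_u$ and right-subtree keys carry $1$, so position $i_u$ is the first bit in which any left key and any right key differ, making the left key the smaller. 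Now if $\zeta_q[t]=0<1=x[I_q][t]$, then $\zeta_q$ is lexicographically smaller than $x[I_q]$, while $y_q$ sits in the left subtree of $v_t$ and $\bkey(x)$ in the right, whence $y_q<\bkey(x)$; the symmetric case $\zeta_q[t]=1>0=x[I_q][t]$ gives $y_q>\bkey(x)$. Since any two equal-length binary strings are lexicographically comparable, these three cases are exhaustive.

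I expect the only point requiring genuine care -- the ``hard part,'' such as it is -- to be the bookkeeping that links the abstract lexicographic comparison to the geometry of the trie: namely the induction showing that a shared prefix of $\zeta_q$ and $x[I_q]$ makes the blind search trace $\pi_q$, together with the justification of the subtree-order property from the stored LCP indices $i_u$. Once these two facts are in place, the lemma follows from a direct case split on the first coordinate of disagreement, and the three implications are in fact equivalences since their hypotheses and conclusions each partition all possibilities.
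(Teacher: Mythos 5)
Your proof is correct. The paper itself gives no proof of this lemma (it is simply declared ``straightforward''), and your argument---induction showing that a shared prefix of $\zeta_q$ and $x[I_q]$ forces the blind search to trace $\pi_q$, followed by a case split at the first coordinate of disagreement using the left-subtree-smaller-than-right-subtree property derived from the stored LCP indices $i_u$---is exactly the natural argument the authors leave implicit, worked out completely and accurately.
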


Based on Lemma \ref{lemma:mon}, given query $x$, we can do binary
search to find $\bs(S,x)$:

\begin{algorithmic}
\State $L\gets 1$, $R\gets |S|$, $q\gets \lfloor (L+R)/2 \rfloor$
\While {$\zeta_{q} \ne x[I_{q}]$}
    \If {$\zeta_{q} < x[I_{q}]$} $R\gets q$
    \Else $\ L\gets q$ \EndIf
    \State $q\gets \lfloor (L+R)/2 \rfloor$
\EndWhile \State \Return $q$
\end{algorithmic}

\begin{lemma}
The above binary search algorithm returns $\bs(x,S)$ and has $O(\log
w)$ iterations.
\end{lemma}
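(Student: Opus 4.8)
The plan is to reduce both claims to two ingredients: that Lemma~\ref{lemma:mon} in fact yields a trichotomy equivalence, and that the loop is then an ordinary binary search over a monotone comparison. Throughout, write $q^{*}=\bs(x,S)$, so that $\bkey(x)=y_{q^{*}}$; since the keys are sorted and distinct, $q^{*}$ is the \emph{unique} index with $y_{q^{*}}=\bkey(x)$.

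The first key step is to upgrade the three implications of Lemma~\ref{lemma:mon} to equivalences. For a fixed query $x$ and index $q$, the left-hand alternatives (``$\zeta_q$ lexicographically smaller than / equal to / larger than $x[I_q]$'') form a trichotomy, and so do the right-hand alternatives ($y_q<\bkey(x)$, $y_q=\bkey(x)$, $y_q>\bkey(x)$): on each side exactly one option holds. A family of implications that maps one partition into another partition is necessarily a bijection, so every implication reverses. Combining this with the sortedness $y_1<y_2<\cdots<y_k$, the sign of the lexicographic comparison of $\zeta_q$ with $x[I_q]$ coincides with the sign of $q-q^{*}$; in particular $\zeta_q=x[I_q]$ holds if and only if $q=q^{*}$.

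With this monotone comparison in hand, correctness is the standard binary-search argument. I would prove by induction on the iteration count the loop invariant $L\le q^{*}\le R$: it holds initially because $1\le q^{*}\le|S|$, and it is preserved because whenever the test $\zeta_q\neq x[I_q]$ is entered (so $q\neq q^{*}$) the comparison reveals on which side of $q$ the index $q^{*}$ lies, and the interval is narrowed precisely to the side still containing $q^{*}$. The loop can only exit through the test $\zeta_q=x[I_q]$, which by the equivalence above forces $q=q^{*}=\bs(x,S)$; hence the returned value is correct. For the iteration bound I would track the width $R-L$: each iteration sets $q$ to (essentially) the midpoint of $[L,R]$ and then replaces one endpoint by $q$, shrinking the width by a constant factor. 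Starting from width $|S|-1<w/\log w$, after $O(\log(w/\log w))=O(\log w)$ iterations the width reaches $0$, forcing $L=R=q^{*}$ and termination.

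The main obstacle I anticipate is not the monotonicity (which is essentially bookkeeping on the trichotomy) but the \emph{progress} argument: since the updates set an endpoint to $q$ itself rather than to $q\pm 1$, one must check, using the floor in $q=\lfloor (L+R)/2\rfloor$, that $R-L$ strictly decreases in every iteration and that the search cannot stall at a boundary of the interval. Verifying this case analysis carefully is what guarantees both termination and the clean $O(\log w)$ iteration count.
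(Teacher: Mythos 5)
The paper offers no proof of this lemma at all --- it is asserted as an immediate consequence of Lemma~\ref{lemma:mon} plus a standard binary search --- so your route (upgrade the three implications to equivalences via the trichotomy on both sides, conclude that the sign of the lexicographic comparison of $\zeta_q$ with $x[I_q]$ equals the sign of $q-q^{*}$ by sortedness of the keys, then run the usual interval-invariant argument) is exactly the intended one, and the trichotomy-to-equivalence step is correct and worth making explicit.

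The genuine gap is the progress argument, which you correctly single out as the crux but then leave unverified --- and verifying it actually fails for the pseudocode as displayed. With $q=\lfloor (L+R)/2\rfloor$ one always has $q<R$ when $L<R$, so the branch that sets $R\gets q$ makes strict progress; but the branch that sets $L\gets q$ stalls whenever $R=L+1$, since then $q=L$ and the assignment changes nothing. This state is reachable: take $|S|=2$ with $q^{*}=2$, and the loop spins forever at $L=q=1$, $R=2$. The repair is to use the fact that inside the loop body $q\neq q^{*}$, so the invariant $L\le q^{*}\le R$ is preserved by the stronger update $L\gets q+1$ (and symmetrically $R\gets q-1$ is also safe), which restores strict shrinkage and the $O(\log(w/\log w))=O(\log w)$ bound. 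A proof of the lemma as literally stated must either make this correction or explain why the displayed updates suffice, and the latter is false. Separately, note that the displayed branches are inverted relative to Lemma~\ref{lemma:mon}: $\zeta_q<x[I_q]$ implies $y_q<\bkey(x)$, i.e.\ $q<q^{*}$, so that case should move $L$ rightward, not set $R\gets q$; your proof silently assumes the corrected orientation, which is the right reading of the authors' intent but should be stated.
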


Next we show how to implement each iteration of this binary search
and compare  $x[I_q]$ and $\zeta_q$ in $O(1)$ time while keeping the
trie stored in $O(1)$ words.

%\begin{figure*}[htbp]
%\centering
%\includegraphics[scale=0.5]{blind-trie.jpg}
%\caption{Binary Search within the Blind Trie.
%\label{fig:blindtriesearch-full}}
%\end{figure*}

For this we devise a sequence $I$ of bit indices, of length $w/\log
w$. Prior to running the binary search  we use the bit-selector of
Section \ref{sec:extract} to compute $x[I]$ and later we use $x[I]$
to construct $x[I_q]$ in every iteration in $O(1)$ time. We extract
$x[I_q]$ from $x[I]$ and retrieve $\zeta_q$ using $O(1)$ additional
words. The details are as follows.

\noindent {\bf The $O(1)$ words which form the $\gamma$ node:} For
each $1 \le q \le |S|$ there is a unique interval $[L_q, R_q]$ of
which $q$ may be the splitting point (i.e.\ $q = \lfloor (L_q+R_q)/2
\rfloor$) during the binary search. Let $\pi_q = u_1, u_2, \ldots,
u_t = \ell(y_q)$ be the path to $q$ as defined above. Define
$j_{L_q} \in 1,\ldots, t$ to be the length of the longest common
prefix of $\pi_q$ and $\pi_{L_q}$. That is
 $u_{j_L}$ is the lowest common ancestor of the leaves
$\ell(y_q)$ and $\ell(y_{L_q})$. Define $j_{R_q}$ analogously, and
let $j=\max(j_{L_q}, j_{R_q})$.

Let $\tilde{\pi}_q$ be the suffix of $\pi_q$ starting at node
$u_{j+1}$, and let $\tilde{I}_{q}$ be the suffix of $I_q$ starting
at $I_q[j+1]$. (These are the indices stored in $u_{j+1}, u_{j+2},
\ldots, u_{t-1}$). Similarly, let $\tilde{\zeta}_{q}$ be the suffix
of $\zeta_{q}$, starting at the $j$th element.

Given $S$, for every $1 \le q \le |S|$ we precompute and store the
following data: $j_{L_q}$, $j_{R_q}$, $\tilde{I_{q}}$,
$\tilde{\zeta_{q}}$. It is easy to verify that  $O(1)$ words suffice
to store the $4 |S|$ values above. Indeed, $j_{L_q}$ and $j_{R_q}$
are indices in $1,\ldots, |S|$, $O(\log w)$ bits each. As the number
of keys $|S| \le w/\log w$,  all the $j_{L_q}$'s, and $j_{R_q}$'s
fit in $O(1)$ words. Since $\tilde{\pi}_{q}$ paths are pairwise
disjoint, the sum of their path lengths is $O(|S|)= O(w/\log w)$.
Hence, storing all the sequences $\tilde{\zeta}_q$, $1\leq q \leq
w/\log w$, requires no more than $O(w/\log w)$ bits. We store the
$\tilde{\zeta}_q$'s concatenated in increasing order of $q$ in a
single word $Z$.

\looseness=-1 The sequence $I$ for which we construct the bit selector is the
concatenation of the $\tilde{I}_q$ sequences, in order of $q$. As
above, it follows that $I$ is a sequence of $O(w /\log w)$ $\log
w$-bit indices. The bit selector $D(I)$ is also stored as part of
the $\gamma$ node.

For each $q$ we also compute and store the index $s_q$ of the
starting position of $\tilde{\zeta}_q$ in $Z$. This is the same as
the index of the starting position of $I_q$ in $I$. Clearly all
these indices $s_q$ can be stored in a single word.

\noindent {\bf Implementing the blind search:} As we mentioned,
given $x$ as a query to the $\gamma$-node, we compute $x[I]$ (once)
from $x$ and $D(I)$, which requires $O(\log w)$ operations and no
more than $O(1)$ probes.

\looseness=-1 At the start of an iteration of the binary search, we have a new
value of $q$, and access to the following values, all of whom are in
$O(1)$ registers from previous iterations:
 $$\quad x[I], \quad j_{L_q}, \quad j_{R_q} \quad L_q, \quad R_q, \quad \zeta_{L_q}, \quad \zeta_{R_q},
\quad x[I_{L_q}], \quad x[I_{R_q}].$$

For the rest of this section let $L=L_q$ and $R=R_q$. We now compute
$\zeta_q$ and $x[I_q]$. We retrieve $j_{L}, j_{R}$ from the
data-structure, and we also retrieve $x[\tilde{I}_{q}]$ from $x[I]$
and $\tilde{\zeta}_{q}$ from $Z$ (note that $x[\tilde{I}_{q}]$ is stored
consecutively in $x[I]$ and $\tilde{\zeta}_{q}$ is stored consecutively in
$Z$, and we use $s_q$ to know where they start).

If $j_L \ge j_R$, we compute $x[I_{q}] \gets (x[I_{L}][1,
\ldots,j_L]) \| (x[\tilde{I_{q}}])$  and $\zeta_{q} \gets
(\zeta_{L}[1,\ldots,j_L]) \| (\tilde{\zeta_{q}})$.

Analogously, if $j_L < j_R$, and we compute
\begin{eqnarray*} x[I_{q}] &\gets& (x[I_{R}][1,\ldots,j_R]) \|
(x[\tilde{I_{q}}])\\ \zeta_{q} &\gets& (\zeta_{R}[1,\ldots,j_R]) \|
(\tilde{\zeta_{q}}).\end{eqnarray*}
\looseness=-1  All these operations are easily
computed using $O(1)$ SHIFT, AND, OR operations.

\section{Open Issues}
\begin{enumerate}
\item Our $(k,k)$-bit selector takes $O(\log w)$ operations,
which are optimal when $k \ge w^{\epsilon}$ for any constant
$\epsilon > 0$. What can be done for smaller values of $k$? ({\sl
E.g.}, for $k=O(1)$ one can definitely do better).

\item It follows from Thorup (\cite{thorup03}) that, in the practical-RAM model, a search node with fan-out $\frac{w}{\log w}$ requires
 $\Omega(\log \log w)$ operations. Our $\gamma$ nodes have fan out $w/\log w$ and require $O(\log w)$ operations. Can this gap be bridged?

\item A natural open question is if the additive $O(\log w)$ in time complexity is required or not.
\end{enumerate}

\subsubsection*{Acknowledgments.} We wish to thank Nir Shavit for introducing us to the problems of
contention in multicore environments, for posing the question of
multicore efficient data structures, and for many useful
discussions. We also wish to thank Mikkel Thorup for his kindness
and useful comments.

\newpage

\appendix
%Appendix A

%\begin{figure*}[ht]
%\centering
%\includegraphics[scale=0.5]{ptlowerbound.jpg}
%\caption{The P\v{a}tra\c{s}cu-Thorup lower bounds,
%$\mathrm{opt}(n,w)$, where $w$ is fixed and $n$ ranges over all
%possible values. The formulas are correct up to (absolute)
%constants, and also match the P\v{a}tra\c{s}cu-Thorup upper bounds.
%To make this figure illustrative, an absurdly long word size was
%chosen.} \label{fig-ptlowerbounds-full}
%\end{figure*}

%\begin{figure*}[ht]
%\centering
%\includegraphics[scale=0.5]{compact-trie.jpg}
%\caption{A Compact Trie and a Blind Trie. Testing bits 1,3,5,6, and
%7 of a query will produce a candidate successor. As not all bits
%have been considered, this requires verification. }
%\label{fig-compacttrie-full}
%\end{figure*}

\section{An illustration of a bit-selector}

\begin{figure*}[htbp]
\centering
\includegraphics[scale=2.2]{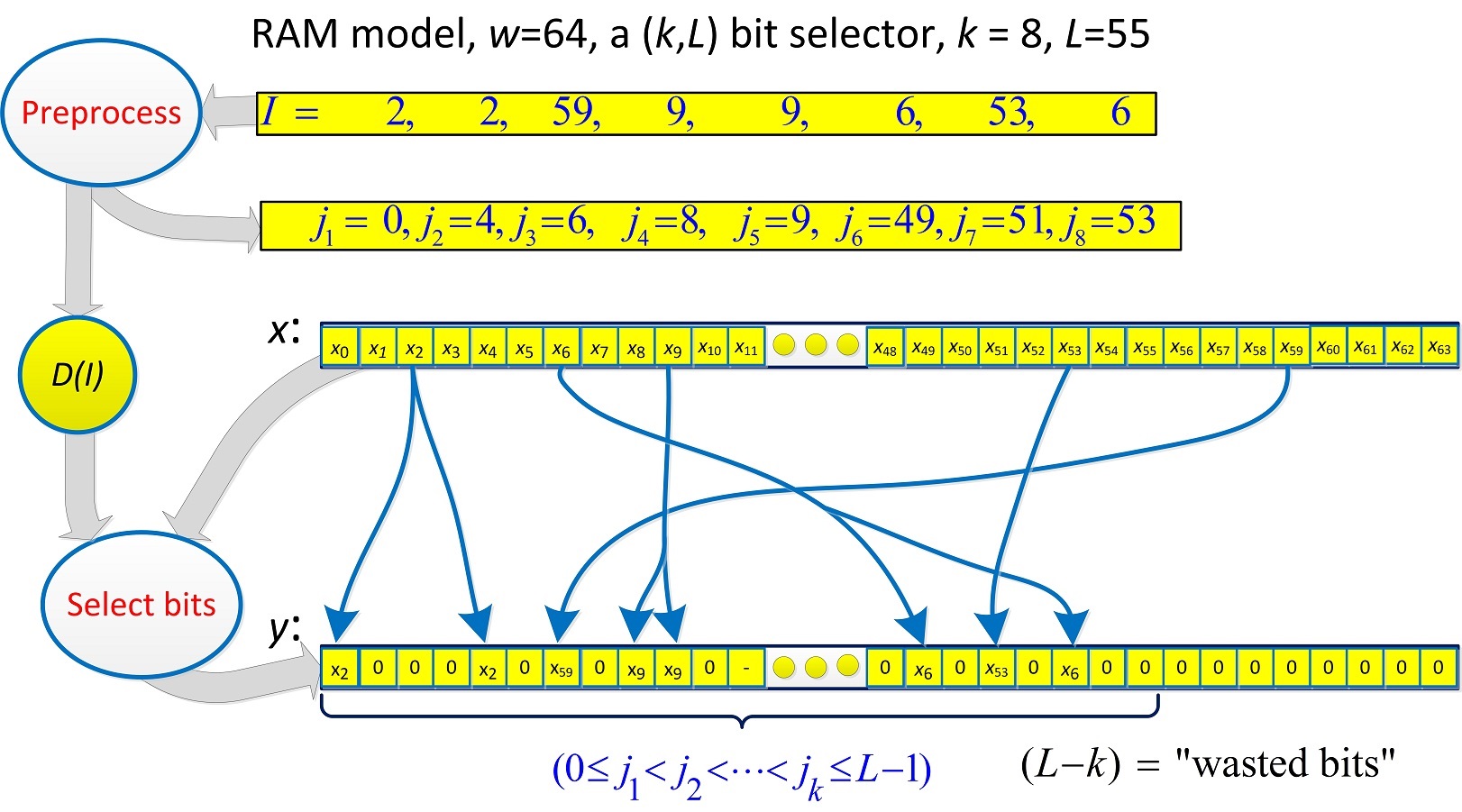}
\caption{A $(k,L)$ bit-selector. Note that $I[\ell]$, $1 \leq \ell
\leq k$, values can include repetitions and need not be in any
order, however the $j_\ell$ sequence is an ascending subsequence of
$1, 2, \ldots, L$.  } \label{fig-extractor1}
\end{figure*}

%\begin{figure*}[p]
%\centering
%\includegraphics[scale=0.51]{sketch.jpg}
%\caption{Using $\gamma$-nodes as the last level of the tree.
%\label{fig:lastgamma-full}}
%\end{figure*}

\section{$\beta$-nodes}
\label{sec:beta-full} We give an alternative single word index, the
$\beta$-node, which, like the $\alpha$-node, is randomized. It's
expected query time is $O(\log w)$. Its worst-case probe complexity
is inferior compared to the $\gamma$-node, but it may be simpler to
understand / implement than the $\alpha$-node. The $\beta$-node does
not require the  use of our bit-selectors, instead, like the
$z$-fast trie, it compares hash values.

Let $S_0$ be a set of $n$ $w$-bit binary strings stored
consecutively in ascending order in the memory. A $\beta$-structure
is a randomized succinct index data-structure, which supports
$rank_{S_0}(x)$ queries for any $x \in \{0,1\}^w$ (recall that
$rank_S(x)$ is the rank of $x$ in $S$). Its size is $O(n(\log w +
\log n) \cdot \frac{1}{w})$ $w$-bit words, the query time is $O(\log
n)$ (in expectation and w.h.p), and the number of probes it makes
outside the index is $O(1)$ (in expectation and w.h.p). Here w.h.p
means that the probability that a query will take $O(\log n)$ time
and $O(1)$ probes outside the index is at least $1-\frac{1}{n}$. By
a $\beta$-node we refer to a $\beta$-structure with $n =
\frac{w}{\log w}$ keys. The size of a $\beta$-node is $O(1)$, the
query time is $O(\log w)$ (in expectation and w.h.p) and the number
of probes outside the index is $O(1)$ (in expectation and w.h.p).

We start by describing a non-succinct version of a $\beta$-structure, which we refer to as $B$-structure, and
then we describe how to transform the non-succinct $B$-structure into a succinct $\beta$-structure
which occupies only $O(n(\log w + \log n))$ bits.

\subsection{The Prefix Partitioning Lemma} Let us start by defining
a prefix-partition operator $\sqsubseteq$: Let $S \in \{0,1\}^*$ be
an arbitrary set of binary strings, and let $p \in \{0,1\}^*$ be a
binary string, we partition $S$ into two subsets: $S_{\sqsubseteq
p}$ and $S_{\not\sqsubseteq p}$.
\newline $S_{\sqsubseteq p}$ is the set of all the elements of $S$ which start with $p$,
and $S_{\not\sqsubseteq p} = S - S_{\sqsubseteq p}$, is the set of all the elements of $S$ which
don't start with $p$.

The following lemma proves that there exists a prefix $p$ which partitions $S$ into
two approximately equal subsets.

\begin{theorem} \label{lemma:StringsSetPartition}
For every set $S$ of binary strings, $|S| \ge 2$, there exists a binary string $p \in \{0,1\}^*$ {\sl s.t.}
$\frac{1}{3}|S| \le |S_{\sqsubseteq p}| \le \frac{2}{3}|S|$  and
$\frac{1}{3}|S| \le |S_{\not\sqsubseteq p }| \le \frac{2}{3}|S|$
\end{theorem}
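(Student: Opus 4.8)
First I would observe that since $|S_{\sqsubseteq p}| + |S_{\not\sqsubseteq p}| = |S|$ for every $p$, the two displayed conditions are equivalent: any prefix $p$ with $\frac{1}{3}|S| \le |S_{\sqsubseteq p}| \le \frac{2}{3}|S|$ automatically satisfies $\frac{1}{3}|S| \le |S_{\not\sqsubseteq p}| \le \frac{2}{3}|S|$. Hence it suffices to produce one prefix $p$ whose ``mass'' $f(p) := |S_{\sqsubseteq p}|$ lands in $[\frac{1}{3}|S|, \frac{2}{3}|S|]$. The plan is to find such a $p$ by a greedy root-to-leaf descent in the binary trie over $S$, exploiting that $f$ is monotone non-increasing as a prefix is lengthened and that it splits between the two one-bit extensions.

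\textbf{Key steps.} I would start at the empty prefix $p = \epsilon$, where $f(\epsilon) = |S| > \frac{2}{3}|S|$, and maintain the invariant $f(p) > \frac{2}{3}|S|$. At a prefix $p$ satisfying the invariant, consider its extensions $p0$ and $p1$. In the prefix-free case (which already covers the intended application, where all keys have the same length $w$) we have the clean identity $f(p0) + f(p1) = f(p)$, so the heavier child $c$ satisfies $f(c) \ge \frac{1}{2} f(p) > \frac{1}{2}\cdot \frac{2}{3}|S| = \frac{1}{3}|S|$. If moreover $f(c) \le \frac{2}{3}|S|$, then $c$ is the desired prefix and we stop; otherwise $f(c) > \frac{2}{3}|S|$, we set $p \leftarrow c$, and the invariant is preserved. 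Each iteration strictly increases $|p|$, and since $f$ eventually drops to $1 \le \frac{2}{3}|S|$ (using $|S| \ge 2$), the descent must terminate through the stopping branch, returning a prefix with $\frac{1}{3}|S| < f(p) \le \frac{2}{3}|S|$.

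\textbf{Main obstacle.} The delicate point is the general case in which some string of $S$ is a proper prefix of another, so a string may equal the current prefix $p$ exactly. Then the split becomes $f(p0) + f(p1) = f(p) - e_p$, where $e_p \in \{0,1\}$ counts the string equal to $p$, and the heavier child only satisfies $f(c) \ge \frac{1}{2}(f(p) - 1)$, which can dip slightly below $\frac{1}{3}|S|$ at the final step of the descent. I expect this boundary behaviour to be the crux. I would resolve it by examining the deepest prefix $p^{\ast}$ with $f(p^{\ast}) > \frac{2}{3}|S|$ — these heavy prefixes form a chain under the prefix order, since any two subtrees of mass exceeding $\frac{2}{3}|S| > \frac{1}{2}|S|$ must share a string and hence be nested — and arguing from the integrality of the counts together with $e_{p^{\ast}} \le 1$ that the heavier child of $p^{\ast}$ (or, when $e_{p^{\ast}} = 1$, a short case analysis on the residual child masses) always lands in $[\frac{1}{3}|S|, \frac{2}{3}|S|]$. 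The constants $\frac{1}{3}$ and $\frac{2}{3}$ leave just enough slack for this rounding argument to close for all $|S| \ge 2$.
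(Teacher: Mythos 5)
Your approach is the same as the paper's: a greedy descent from the empty prefix, always extending by the heavier of the two one-bit extensions and stopping the first time the mass $f(p)=|S_{\sqsubseteq p}|$ drops to at most $\frac{2}{3}|S|$. For prefix-free $S$ (in particular for the intended application, where all keys are $w$-bit strings, so no key is a proper prefix of another) your argument is complete and correct, and in fact more careful than the paper's, which dismisses the lower bound at the stopping point with ``it is easy to verify''.

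However, the resolution you sketch for your ``main obstacle'' cannot work, because the theorem as literally stated --- for an arbitrary set of binary strings --- is false. Take $S = \{0,\ 00,\ 000,\ 001\}$, so $|S|=4$ and the theorem demands a prefix $p$ with $|S_{\sqsubseteq p}| = 2$, the only integer in $[\frac{4}{3},\frac{8}{3}]$. But $|S_{\sqsubseteq 0}|=4$, $|S_{\sqsubseteq 00}|=3$, $|S_{\sqsubseteq 000}|=|S_{\sqsubseteq 001}|=1$, and every other prefix has mass $0$: no prefix has mass $2$. The one unit of mass lost when a string of $S$ equals the current prefix, which you correctly isolate as the crux, is exactly what kills the claim, and no rounding slack in the constants $\frac{1}{3},\frac{2}{3}$ can recover it (here the deepest heavy prefix $00$ has both residual children of mass $1 < \frac{4}{3}$). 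The correct repair is not a finer case analysis but an added hypothesis --- that no string of $S$ is a proper prefix of another, or simply that all strings have length $w$ --- under which your first argument already closes. This is a gap in the paper's own proof as well, not only in your proposal; the application to $\beta$-structures is unaffected since there $S$ consists of equal-length keys.
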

\begin{proof}
Let initially $p = \epsilon$ be the empty string. While $(|S_{\sqsubseteq p}| > \frac{2}{3}|S|)$,
if  $|S_{\sqsubseteq p0}| \ge |S_{\sqsubseteq p1}|$ add a $0$-bit to the end of $p$, otherwise add
a $1$-bit to the end of $p$.
We stop the loop at the first time that $|S_{\sqsubseteq p}| \le \frac{2}{3}|S|$, and it is easy to verify
that at that point we also have
that $|S_{\not\sqsubseteq p}| \le \frac{2}{3}|S|$.
\end{proof}

\subsection{Construction of a $B$-structure}
Let $S_0$ be the initial set of $n$ $w$-bit binary strings. Assume without loss of generality that $0^w \in S_0$.
We can assume that, since if $0^w \not\in S_0$ then $rank_{S_0}(x) = rank_{S_0 \cup \{0\}}(x) - 1$ for
every $x>0$.
The $B$-structure is a binary tree containing a prefix of the strings in $S_0$ in each internal node, and
at most $3$ keys of $S_0$ at each leaf.

We define the $B$-node recursively for a subset $S \subseteq S_0$ (starting with $S = S_0$).
Let $p$  be a prefix of a key in $S$ as in Lemma \ref{lemma:StringsSetPartition},
define $p_{min}, p_{max} \in S$ to be the
minimum and maximum keys respectively in $S$ which starts with $p$. Store $p$ in the root of the $B$-structure
of $S$, the right child is a $B$-structure of
$S_R = S_{\sqsubseteq p} \cup \mathrm{StrictPredecessor}(S,p_{min})$
(we define here $\mathrm{StrictPredecessor}(S,p_{min})$ to the the maximal key in $S$ which is
smaller than $p_{min}$, or $p_{min}$ itself if it is the minimal key in $S$), the left child is a
$B$-structure of $S_L = S_{\not\sqsubseteq p} \cup p_{max}$.
We "associate" $S$ with the root, $S_R$ with its right child,
and $S_L$ with its left child. When $|S| \le 3$, we stop the recursion and store the (at most 3) keys of $S$
in the leaf of the $B$-structure.

According to Lemma \ref{lemma:StringsSetPartition}, $|S_R|, |S_L| \le \frac{2}{3} |S| + 1$, and since
$|S_L| + |S_R| \le |S| +2$, we get that the height
of the resulting tree is $O(\log |S_0|)$, and the number of nodes in the tree is $O(|S_0|)$.

\subsection{Querying the $B$-node}
Let $x \in \{0,1\}^w$ be  the query word. Start the search in the root.
The root contains a prefix $p$, if $x$ starts with $p$ continue the search in the right child, otherwise, continue
the search in the left child. Continue the search similarly in every internal node that we reach,
until we reach a leaf.

In the leaf at most $3$ keys are stored, denote them by $s_1 < s_2 < s_3$. If $s_1 \le x < s_2$ output
$\mathrm{rank}_{S_0}(s_1)$ (that is, the rank of $s_1$ in $S_0$). If $s_2 \le x < s_3$ output
$\mathrm{rank}_{S_0}(s_2)$. Otherwise, $x > s_3$ output
$\mathrm{rank}_{S_0}(s_3)$.

\subsection{Correctness of the $B$-structure}
Given $x \in \{0,1\}^w$ we need to prove that the output of the search procedure is $\mathrm{rank}_{S_0}(x)$.

Let $y = \mathrm{Pred(S_0, x)}$
be the predecessor of $x$ in $S_0$.
At the end of the search we reach a leaf which stores between $1$ and $3$ elements of $S_0$.
We need to prove that $y$ is one of these keys. Let $(v_0, v_1, \ldots, v_k)$ be the root-to-leaf path
traversed during the search. Let $p_i$ be the prefix stored at $v_i$ and let $S_i$ be the subset of $S_0$
associated with $v_i$, for $i=0,1,\ldots,k$. We prove by induction on $i$, for $i = 0, 1, \ldots, k$,
that $y$ is a member of $S_i$. This is correct at the root (i=0), since $y \in S_0$ by our assumption
that $0^w \in S_0$.
For the inductive step, we assume $y \in S_i$, and prove that $y \in S_{i+1}$.

\begin{lemma}
Let $y = \mathrm{Pred}(S_0, x)$. If $y \in S_i$ then $y \in S_{i+1}$.
\end{lemma}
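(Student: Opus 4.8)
The plan is to argue by a case analysis on the direction the search takes at $v_i$ (governed by whether $x$ begins with $p := p_i$) and, orthogonally, on whether the predecessor $y$ itself begins with $p$. Writing $S := S_i$, everything rests on one elementary observation I would establish first: viewing the length-$w$ strings as numbers, the set of strings having $p$ as a prefix is exactly the numeric interval $[\,p\,0^{w-|p|},\ p\,1^{w-|p|}\,]$. Consequently $p_{min}$ and $p_{max}$, the smallest and largest keys of $S$ in this interval, bracket all prefix-$p$ keys of $S$, and any key of $S$ lying strictly below $p_{min}$ fails to begin with $p$ (otherwise it would contradict minimality of $p_{min}$).

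First I would dispatch the two \emph{aligned} cases, which are immediate from the definitions and the inductive hypothesis $y \in S_i$. If the search goes right then $x$ begins with $p$ and $S_{i+1} = S_R = S_{\sqsubseteq p}\cup\{\mathrm{StrictPredecessor}(S,p_{min})\}$; should $y$ also begin with $p$, then $y \in S_{\sqsubseteq p}$, so $y \in S_{i+1}$. Symmetrically, if the search goes left then $x$ does not begin with $p$ and $S_{i+1}=S_L = S_{\not\sqsubseteq p}\cup\{p_{max}\}$; if $y$ also does not begin with $p$ then $y\in S_{\not\sqsubseteq p}\subseteq S_{i+1}$.

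The hard part will be the two \emph{misaligned} cases, where $y$'s prefix status disagrees with the branch taken; here I would lean on the fact that $y=\mathrm{Pred}(S_0,x)$ is the \emph{largest} key of $S_0$ that is $\le x$. Consider the right branch with $y$ not beginning with $p$. Since $y\le x$ and $x$ lies in $[\,p\,0^{w-|p|},\ p\,1^{w-|p|}\,]$, a $y$ outside the interval must lie \emph{below} it, i.e.\ $y < p\,0^{w-|p|}$. I would then argue $y = \mathrm{StrictPredecessor}(S,p_{min})$: any $S$-key $y'$ with $y<y'<p_{min}$ cannot begin with $p$ (it is below $p_{min}$), so $y' < p\,0^{w-|p|}\le x$, making $y'$ a key of $S_0$ that is $\le x$ and exceeds $y$ --- contradicting maximality of $y$. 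Hence no such $y'$ exists and $y$ is the strict predecessor of $p_{min}$ in $S$ (note $y<p_{min}$ rules out the degenerate ``$p_{min}$ is minimal'' clause), so $y\in S_R$. The symmetric misaligned case (left branch, $y$ beginning with $p$) is handled the same way: here $y\le x$ with $y\ge p\,0^{w-|p|}$ and $x\notin$ the interval force $x>p\,1^{w-|p|}$, and any $S$-key exceeding $y$ that begins with $p$ would be $\le p\,1^{w-|p|}<x$, again contradicting maximality; thus $y=p_{max}\in S_L$.

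I expect the interval observation together with the maximality-of-$y$ contradiction in the misaligned cases to be the only real content; the remaining bookkeeping --- that the single ``boundary'' element adjoined to each child is exactly the key the predecessor collapses to --- then falls out directly.
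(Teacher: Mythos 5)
Your proof is correct and follows essentially the same route as the paper's: a case split on the branch taken at $v_i$, with the adjoined boundary element ($\mathrm{StrictPredecessor}(S_i,p_{min})$ or $p_{max}$) absorbing the predecessor in the misaligned cases via the maximality of $y=\mathrm{Pred}(S_0,x)$. You are in fact somewhat more careful than the paper --- you make the interval characterization of prefix-$p$ strings explicit and note that the degenerate clause of $\mathrm{StrictPredecessor}$ cannot fire --- whereas the paper compresses the right-branch misaligned case into a single ``it must be that'' sentence and organizes the left-branch case by the position of $x$ rather than by the prefix status of $y$.
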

\begin{proof}
Let $p_{min}, p_{max} \in S_i$ be the minimum and maximum keys in $S_i$ respectively which start with $p_i$.

If $v_{i+1}$ is a right child,
then $x$ starts with $p_i$ and $S_{i+1} = (S_i)_{\sqsubseteq p_i} \cup \mathrm{StrictPredecessor}(S_i, p_{min})$.
If $y \in (S_i)_{\sqsubseteq p_i}$ we are done, otherwise it must be that $y = \mathrm{StrictPredecessor}(S_i, p_{min})$
since $x$ starts with $p_i$ and $y$ is its predecessor in $S_i$.

If $v_{i+1}$ is a left child,
then $x$ doesn't start with $p_i$ and $S_{i+1} = (S_i)_{\not\sqsubseteq p_i} \cup p_{max}$.
If $x < p_{min}$ then $y$ doesn't start with $p_i$ and hence $y \in (S_i)_{\not\sqsubseteq p_i}$.
If $x > p_{max}$ then either $y = p_{max}$, or $y > p_{max}$ and then $y \in (S_i)_{\not\sqsubseteq p_i}$.
We get  that in all the cases, $y \in (S_i)_{\not\sqsubseteq p_i} \cup p_{max}$ as required.
\end{proof}

\subsection{Making it succinct: from $B$-structure to $\beta$-structure}
We now describe the succinct variant of the $B$-structure, which we call $\beta$-structure.
Its index occupies
$O(n(\log w + \log n))$ bits, and its search time is $O(\log n)$ (w.h.p), and the number of probes outside
the index is $O(1)$ (w.h.p).

Each node of the $\beta$-structure occupies $\log w + \log n$ bits, defined as follows:
\begin{itemize}
\item \textbf{Inner nodes:} Replace every prefix $p$ in the $B$-structure with
a pair $\mathrm{<}|p|, h(p)\mathrm{>}$, $|p|$ is the length of the prefix $p$ ($\log w$ bits)
and $h(p)$ being a signature of $p$ of length $(2\log n)$ bits, computed using a universal
hash function. To test if $x$ starts with $p$ check if $h(x[1..|p|]) = h(p)$. If so, then $x$ starts with $p$
with probability at least $1-2^{-2\log n} = 1-(\frac{1}{n})^2$, otherwise $x$ doesn't start with $p$.

\item \textbf{Leaves:} In the leaves, replace the $O(1)$ (at most $3$) keys stored at each leaf with
their rank in $S_0$ ($O(\log n)$  bits).
In the search procedure, when reaching a leaf, retrieve these keys (from the static sorted list of the keys of $S_0$)
using their ranks in $O(1)$ word-accesses.
\end{itemize}

Finally, at the end of the search assume the algorithm suggests that $\mathrm{rank}_{S_0}(x) = i$. We can test
if it's correct using $O(1)$ word-accesses by checking that $s_i \le x < s_{i+1}$
(recall the notation $S_0 = s_1 < s_2 < \ldots < s_n$).
With probability at most $2^{-2\log n} \cdot \log n < \frac{1}{n}$ we will get an error at some node along the
search path. When an error is detected we
do a binary search to find the predecessor of $x$ among the static set of $n$ keys, this takes $O(\log n)$ time and
probes. So the binary search takes $o(1)$ on average.
When no error occurs, the search time is $O(\log w)$ (this happens with probability at least $1-\frac{1}{n}$).
Hence, the query time is $O(\log w)$ (in expectation and w.h.p), and we accesses only $O(1)$ words
outside the index (in expectation and w.h.p).

\section{Detailed Description of the \\ $(w/\log w, w/\log w)$ Bit-Selector}
\label{sec:extract-full}

In this section we describe both the preprocessing and selection
operations for our bit-selectors. We follow the selection process,
which makes use of $D(I)$, the output of the preprocessing. While
describing the selection process we specify the different components
of $D(I)$.

$D(I)$ consists of $O(1)$ words and includes precomputed constants
used during the selection process. As $D(I)$ is $O(1)$ words, we
assume that $D(I)$ is loaded into registers at the start of the
selection process. Also, the total working memory required
throughout the selection is $O(1)$ words, all of whom we assume to
reside within the set of registers.

Let $I= I[0], I[1], \ldots, I[k-1]$, $k \le w/\log w$, be the
sequence of bit indices to be selected from some input word $x$. $I$
may contain repetitions. The indices $I[j]$ range in $[0,w-1]$,
where bit zero is the most significant bit (on the left in the
Figures). Let $r$ be the number of distinct values in $I[0], \ldots,
I[k-1]$.

Partition the sequence $\sigma= 0,1, \ldots,w-1$ into $w/\log w $
blocks (consecutive, disjoint, subsequences of $\sigma$), each of
length $\log w$. Let $B_j$ denote the $j$th block of a word, {\sl
i.e.}, $B_j = j\log w, j\log w + 1, \ldots, (j+1)\log w -1 $,  $0
\le j \le w/\log w -1$.

We define the following notation. For sequences $\sigma$ and $\tau$,
$\sigma\cap \tau$ denotes a subsequence of $\sigma$ consisting of
those values that appear somewhere in $\tau$.
 We recall that for a sequence of
indices $\sigma$, we define $x[\sigma]$ to be the bits of $x$ in
these positions (ordered as in $\sigma$), if $\sigma$ has
multiplicities then $x[\sigma]$ also has multiplicities.

An assignment to $x[\sigma]$, such as $x[\sigma] \leftarrow b_1,
b_2, \ldots, b_{|\sigma|}$, $b_i\in \{0,1\}$, is shorthand notation
for $x[\sigma[1]] \leftarrow b_1$, $x[\sigma[2]] \leftarrow b_2$,
$\ldots$, $x[\sigma[|\sigma|]] \leftarrow b_{|\sigma|}$. (Assignment
to $x[\sigma]$ makes sense if $\sigma$ has no multiplicities).

Also, given a word $z$, let $z \gg i$ denote a right shift of $z$ by
$i$ bits, and $z \ll i$ a left shift by $i$ bits.

Given an input word $x$ and the precomputed $D(I)$, the selection
process goes through the seven phases described below.

\def\suff{{\rm suff}}
\subsection{The ever changing  $x$ and $I$}

As we process the various phases and sub phases of the bit
selection, the original bits of $x$ are permuted, duplicated, or set
to zero.

Phases $1 - 5$ and 7 simply permute the bits of $x$. Each
permutation is performed in $O(\log w)$ operations. Phase 6
duplicates some of the bits in $x$ (those bits with multiplicity $
> 1$ in the sequence $I$). Each of the phases requires
$O(1)$ precomputed words throughout its execution.

Let $x_0$ be the original word and $I_0=I$ be the original sequence
of indices. Moreover, let $x_t$ be the word $x$ after phases $1$ to
$t$, and let $I_t$ be a sequence of indices such that for all
$j=0,\ldots,k-1$, $x_t[I_t[j]]=x_0[I_0[j]]$.

For any $t$, $0<t\neq 6$, imagine that $I_t$ is obtained by changing
$I_{t-1}$ so as to reflect the bit permutation performed on
$x_{t-1}$ to get $x_t$. This permutation on $I_{t-1}$ need not be
actually done, these permutations are implicitly used by the bit
selection algorithm.

During phase 6, where bits are duplicated so that the number of
copies of each bit is equal to the multiplicity of the index of the
bit in $I_0$ (or $I_5$), imagine that $I_6$ is produced from $I_5$
by removing multiplicities and substituting $i+j-1$ for  the $j$th
appearance of index $i$ in $I_5$

It follows that for all $0 \leq t \leq 5$, $0 \leq j \leq k-1$, the
multiplicity of $I_t[j]$ is equal to the multiplicity of $I_0[j]$.
For $t=6,7$, $0 \leq j \leq k-1$, the multiplicity of $I_t[j]$ is
one.

Initially, all bits not appearing in $I_0$ are set to zero simply by
setting $x_0 \leftarrow x\ \AND\ M$ where $M$ is a mask with it's
$i$th bit equal to one iff $i$ appears in $I$.

The final output of the bit selection, from left to rights, is a
word $x_0[I] \| 0^{w-|I|}$.

For brevity, we use $x$ as a continuously changing variable
throughout the description of the different phases. The sequences
$I_t$ are needed during the preprocessing phase, the query phase
requires only a constant number of precomputed words. We describe
how the preprocessing phase keeps track of the various $I_t$
sequences and the permutations applied to $x$ implicitly through the
description of the phases.

\subsection{Phase 1: Packing blocks to the left (Figure \ref{fig:phase1overview-full})}

We now describe the procedure for rearranging the bits of $x$ so
that for all blocks $B$, the bits $x[B\cap I]$ are assigned to the
leftmost positions of $x[B]$, preserving their  order.  This will be
done for all blocks in parallel by the inherent parallelism of word
operations.

  For block $B$,
 let $\suff_i(B)$ be the length $i$ suffix of $B$.

Phase 1 requires $\log{w}$ subphases. We maintain the following
invariant after subphase $i$, $1\le i \le \log w$: The bits
 $x[\suff_i(B)\cap I]$ are assigned to the leftmost positions of
$x[\suff_i(B)]$ and the other bits of $x[\suff_i(B)]$  are set to
zero. Bits of $x$ whose indices are not in $\suff_i(B)$ do not
change. Note that this invariant initially holds for $i=1$.

 At subphase $i$, for $i=2,\ldots ,\log w$,
 for each block $B$ whose $i$th largest index is not in $I_0$
we assign $x[\suff_{i-1}(B)]\| '0'$ to $x[\suff_{i}(B)]$.

\begin{figure*}[htbp]
\centering
\includegraphics[scale=0.6]{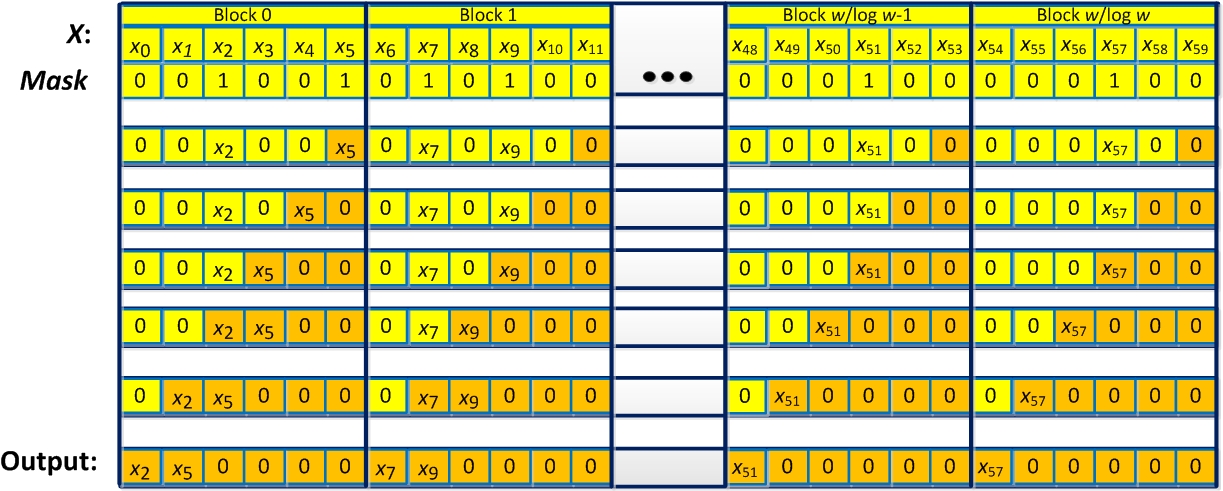}
\caption{An illustration of Phase $1$.
\label{fig:phase1overview-full}}
\end{figure*}

Let $Z_i$ be a word with $1$ at the $i$th largest index of each
block, and zeros elsewhere. We need $Z_i$ during subphase $i$ of
Phase 1. $Z_1$ can be constructed on the fly in a register, in time
$O(\log w)$, $Z_i$ is simply a left shift of $Z_1$ by $i-1$. Let
$L_i = \overline{M}\ \AND\ Z_i$, and let
 $S_i
= L_i - (L_i \gg (i-1))$. See Figure \ref{fig:phase1masks-full}.

\begin{figure*}[htbp]
\centering
\includegraphics[scale=0.5]{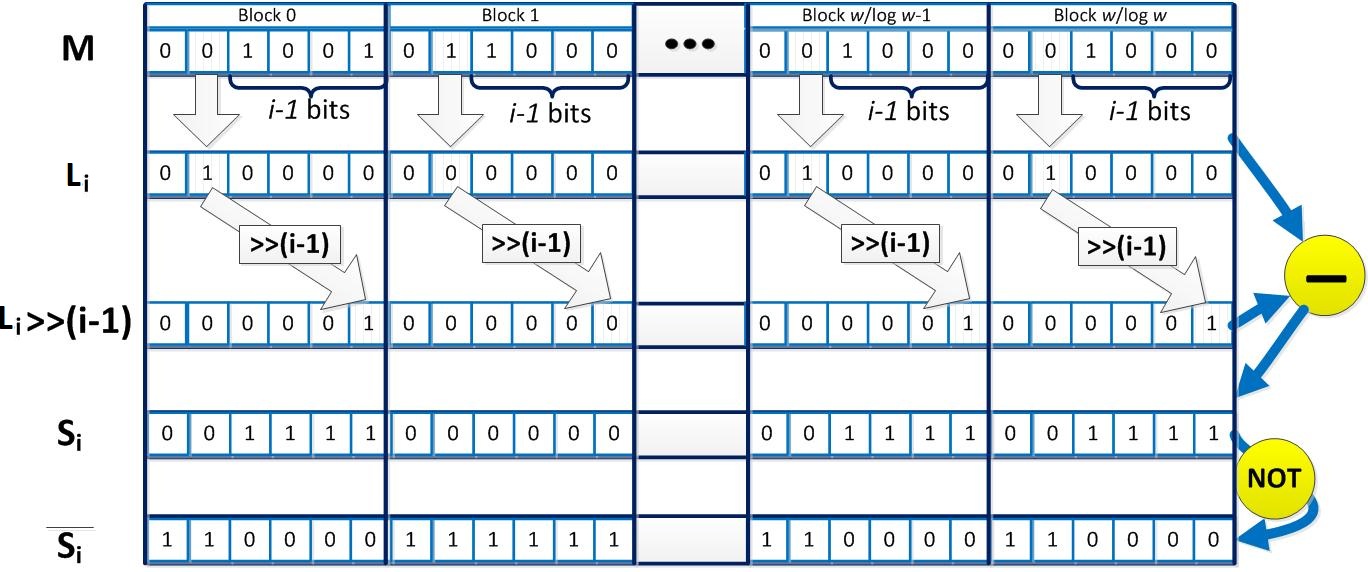}
\caption{The masks used during Phase 1.
\label{fig:phase1masks-full}}
\end{figure*}

\begin{figure*}[htbp]
\centering
\includegraphics[scale=0.5]{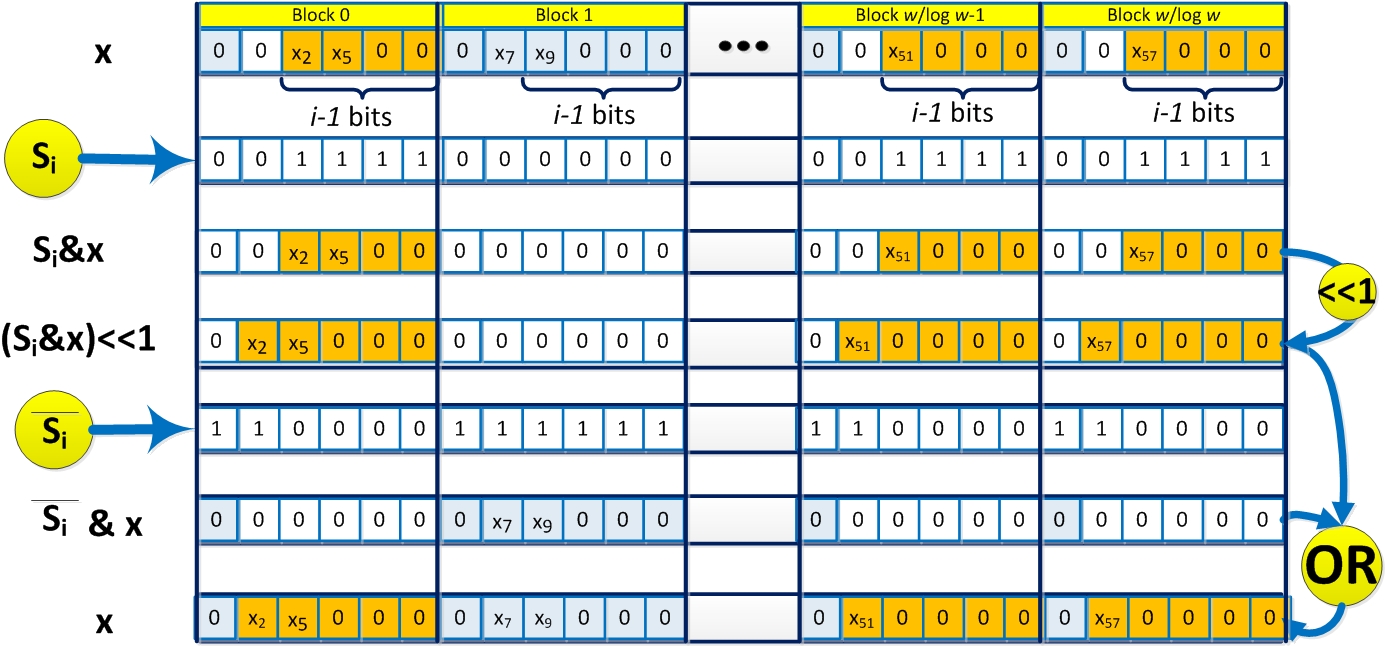}
\caption{A subphase of Phase 1. \label{fig:phase1iterations-full}}
\end{figure*}

The $i$th subphase is as follows: We compute $y_1=x\ \AND\ S_i$
which gives the bits that have to be left shift by one position, and
we compute $y_2=x\ \AND\ \bar{S_i}$ which gives a word containing
the bits which are to remain in their positions. Finally, we set $x
= (y_1 \ll 1)\ \OR\ y_2$. See Figure
\ref{fig:phase1iterations-full}.

\subsection{Phase 2: Sorting blocks by size (Figure \ref{fig:phase2-full})}

We permute $x[B_0],\ldots,x[B_{\frac{w}{\log w}-1}]$ such that they
are in non-increasing order of $|B_j\cap I_1|$.  Note that we know
this permutation when preprocessing $I_1$. We implement this step
using a simulation of a Benes-network (described in Section
\ref{subsec:benes-full}). This simulation requires
 $O(\log w)$ operations, and uses
$O(1)$ precomputed constants stored in $D(I)$, and $O(1)$ registers.

\begin{figure*}[htbp]
\centering
\includegraphics[scale=0.5]{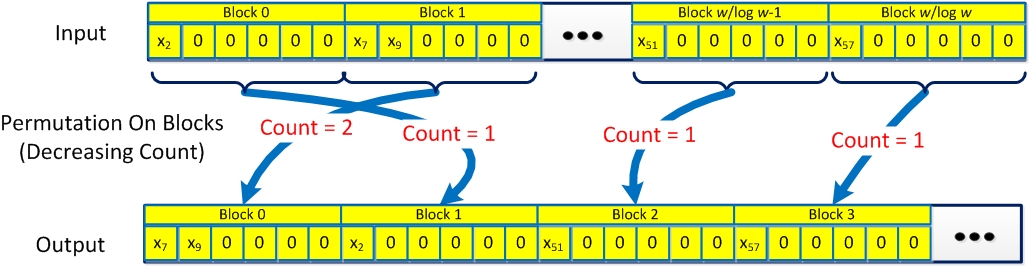}
\caption{An illustration of Phase 2. \label{fig:phase2-full}}
\end{figure*}

\subsection{Phase 3: Dispersing bits (Figure \ref{fig:phase3-full})}

Recall that $r$ is the number of distinct values in $I$ . Let
$\sigma = B_0[0] , B_1[0] , \ldots , B_{r-1}[0]$, {\sl i.e.},
$\sigma$ is a sequence of the first (and smallest) index in every
block. For any sequence $\pi$, define $\tau(\pi) = \tau_1, \tau_2,
\ldots, \tau_r$ be a subsequence of $\pi$ where $\pi[j]$ is
discarded if $\pi[j']=\pi[j]$ for some $j'<j$.

 In Phase 3 we  disperse the bits  of $x[I_2]$, so that
$$x[\sigma] \leftarrow x[\tilde{\tau}],$$ for some sequence
$\tilde{\tau}$ produced by some permutation on the order of
$\tau(I)$. The description of $\tilde{\tau}$, is implicit in the
description of Phase 3 below.

Following Phase 2, we have that  $|B_0 \cap I_2| \ge |B_1 \cap I_2|
\ge \ldots \ge |B_{w/\log w-1}\cap I_2|$. Therefore, for $1\leq i
\leq \log w$, we can define $$a_i = |\{j |  |B_j\cap I_2|\geq i
\}|.$$ Let $A_i = \sum_{\ell=1}^{i} a_\ell$ for $i=1,\ldots,\log w$,
and define $A_0=0$.

We can now define the sequence $\sigma_i$, $1\leq i \leq \log w$,
$$\sigma_i= \sigma[A_{i-1}\log w], \sigma[(A_{i-1}+1)\log w], \ldots, \sigma[(A_i-1)\log w],$$ and the sequence
$$\xi_i = B_0[i-1], B_1[i-1], \ldots, B_{a_{i}-1}[i-1], \qquad 1\leq i \leq \log w.$$

Phase 3 has  $\log{w}$ subphases. Subphase $i$ of Phase 3 performs
the assignment $x[\sigma_i] \leftarrow x[\xi_i]$, this assignment
can be implemented using $O(1)$ operations.

Isolate the bits to be moved (indices $\xi_i$), shift them to their
new locations (indices $\sigma_i$, note that $\sigma_i[j]-\xi[j] =
\sigma_i[j']-\xi[j']$ for all $1\leq j,j'\leq |\sigma_i|$),
producing word $y$. Next,update $x$ by setting $x[\xi_i]$ to zero
and taking the OR with $y$.

The values $A[i]\log w$ are stored as part of $D(I)$ (in total
$\log^2 w$ bits). These values suffice so as to generate all the
masks and operations required in Phase 3.

\begin{figure*}[htbp]
\centering
\includegraphics[scale=0.5]{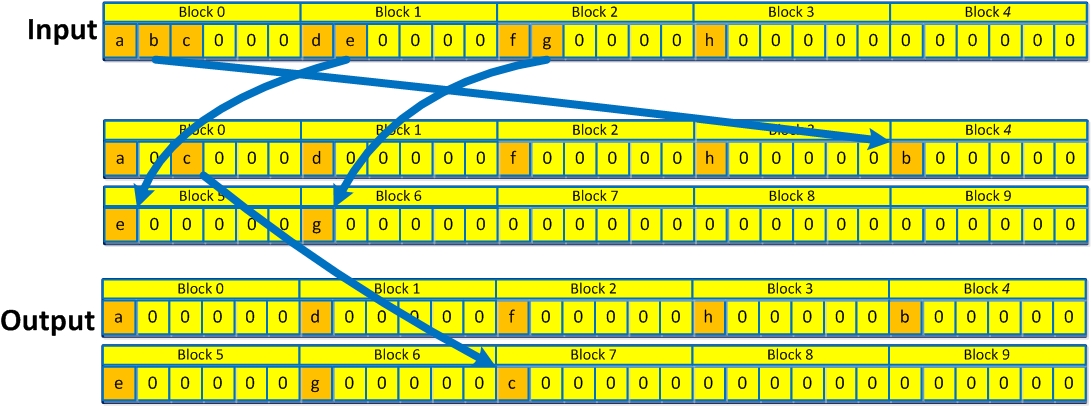}
\caption{An illustration of Phase 3. \label{fig:phase3-full}}
\end{figure*}

\subsection{Phase 4: Packing bits (Figure \ref{fig:phase4-full})}

 Let $\sigma$ and $\tau(\pi)$  be as defined at the
start of Phase 3. in Phase 4 we ``push" the bits $x[\tau(I_3)]$ to
the left, {\sl i.e.},
$$x[0,1, \ldots, r-1] \leftarrow x[\tilde{\tau}],$$
where $\tilde{\tau}$ is a sequence produced by some permutation on
the order of $\tau(I_3)$.  As in Phase 3, the description of
$\tilde{\tau}$, is implicit in the description of Phase 4 below.
Note that $\tau(I_3)$ is a permutation of $\sigma$.

There are $\log w$ subphases in Phase 4.

Let $q=\lfloor r/\log w \rfloor$. In subphases $1,\ldots,\log w-1$
we fill $x[B_0]$, $x[B_1]$, $\ldots$, $x[B_{q-1}]$, with some
permutation of the first $q \log w$ bits of $x[\sigma]$. The last
subphase is used to copy the $k< \log w$ leftover bits of
$x[\sigma]$ into $x[B_q[0], B_q[1], \ldots, B_q[k-1]]$.

For $1 \leq i < \log w$ define the sequences $\upsilon_i = i, \log w
+ i, \ldots, (q-1)\log w + i$ and $\zeta_i = i q \log w, (iq + 1)
\log w, \ldots, ((i+1)q -1) \log w $.

 In Subphase $i$ of Phase
4 we perform the assignment $$x[\upsilon_i] \leftarrow x[\zeta_i].$$

To do this using word operations, we first isolate the bits of
$x[\zeta_i]$, shift them so as to be in their target locations,
$\upsilon_i$, and ``or" them into place.

The last subphase copies the remaining bits one by one, for a total
of $O(\log w)$ operations.

\begin{figure*}[htbp]
\centering
\includegraphics[scale=0.5]{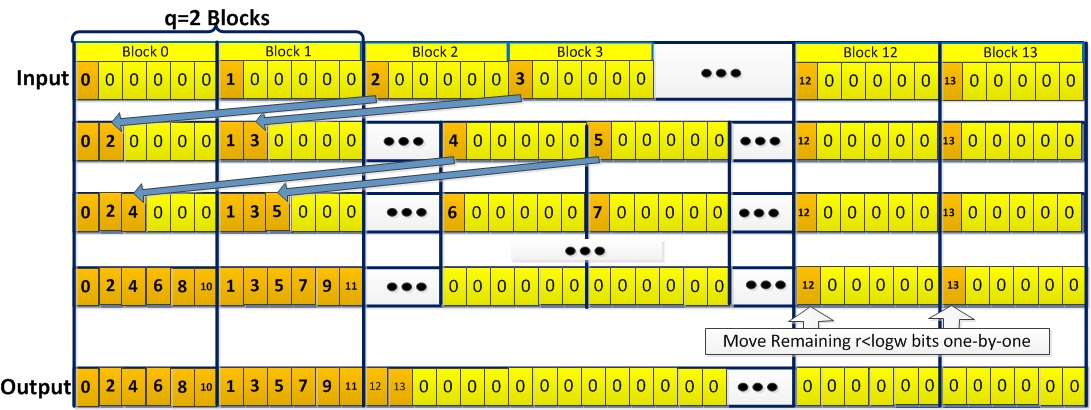}
\caption{An illustration of Phase 4. \label{fig:phase4-full}}
\end{figure*}

\subsection{Phase 5: Spacing the bits (Figure \ref{fig:phase5-full})}
At the end of Phase 4 $x[0,1, \ldots, r-1]$ is a permutation of the
bits of $x[\tau[I_4]]$, and $x[j]$, $j\geq r$, are zero. Our goal is
now to space the bits so as to make space for duplication of those
bits whose indices appear multiple times in $I_4$.

In this phase we space the bits $x[0,1, \ldots, r-1]$ by
``inserting" $j-1$ zeros between $x[\ell]$  and $x[\ell+1]$ iff
$\ell$ appears $j$ times in $I_4$. We do this by permuting the bits
of $x[0,1, \ldots, k]$. There is one unique permutation that
achieves this goal. This is done by simulating a Benes sorting
network, in time $O(\log w)$, and using only $O(1)$ precomputed
constants and $O(1)$ registers.

\begin{figure*}[htbp]
\centering
\includegraphics[scale=0.5]{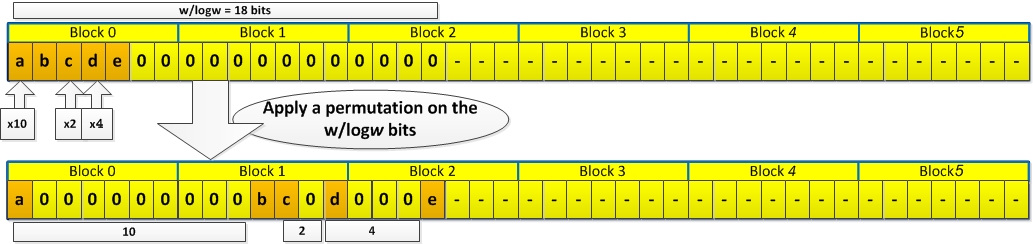}
\caption{An illustration of Phase 5. \label{fig:phase5-full}}
\end{figure*}

\subsection{Phase 6: Duplicating bits (Figure \ref{fig:phase6-full})}

For a sequence $\varrho$ let $m_\ell(\varrho)$ be the number of
occurrences of $\ell$ in $\varrho$. At the end of Phase 5, for every
$\ell\in I_5$ such that $m_\ell(I_5) > 1$  we have that $x[\ell+1,
\ell+2, \ldots, \ell+m_\ell(I_5)-1]$ contain zeros and none of the
indices $\ell+1, \ell+2, \ldots, \ell+m_\ell(I_5)-1$ appear in
$I_5$.

Phase 6 consists of $\log w$ subphases, $i=1, \ldots, \log w$.
Subphase $i$ duplicates a subset of the bits of $x$ specified by a
$w/\log w$ bit mask $M_i$. All these masks a precomputed at
preprocessing and store in a single word with $D(I)$.

We compute the masks $M_i$ as follows. Let $I^i_6$ be the sequence
which describes the positions of the bits of $I$ at the end of
 subphase $i$, and let $I^0_6=I_5$. When a bit is copied
 we  split its remaining multiplicity among the two copies.

Let $\Delta_i = 2^{\log w -i}$.  Subphase $i=1,\ldots,\log w$
duplicates those bits $x[\ell]$ for which
\begin{equation}m_\ell(I^{i-1}_6) > \Delta_i.\label{eq:mell-full}\end{equation}
So $M_i$ is set to one in all positions $\ell$ for which Equation
(\ref{eq:mell-full}) holds and it is zero in all other places.

 $I^i_6$ is computed from $I^{i-1}_6$ as follows: For every $\ell$
that appears somewhere in $I^{i-1}_6$ let $i_1<i_2< \cdots < i_t$ be
the indices of all occurrences of $\ell$ in $I^{i-1}_6$. Let
$I^i_6[i_j] = \ell$ for $j<\Delta_i$ (unchanged from
$I^{i-1}_6[i_j]$), and set $I^i_6[i_j] = \ell + \Delta_i$ otherwise.
This effectively splits the multiplicity of $\ell$ between $\ell$
and its new copy $\ell + \Delta_i$. Bit $\ell$ has now multiplicity
$\Delta_i-1$ and bit $\ell + \Delta_i$ has the remaining
multiplicity.

At query time in subphase $i$ we set $x=(x\ \OR\ ((x\ \AND\ M_i) \gg
\Delta_i))$.

% Define $\varsigma_i$ to be the sequence of all such indices
% $\ell$ for which Equation \ref{eq:mell} holds at the start of subphase $i$.
% For every $i$ we store in $D(I)$ a precomputed $w/\log w$ bit mask, $M_i$, with bit $j$
% equal to one iff $j$ in $\varsigma_i$. As $M_i$ is $w/\log w$ bits
% long, we can store all $\log w$ masks $M_1, M_2, \ldots, M_{\log
% w}$ in a single $w$ bit word.
%
%
% We duplicate the bits $x[\varsigma_i]$ by the assignment
% $x[\varsigma_i + \Delta_i] \leftarrow x[\varsigma_i]$, where
% $+\Delta_i$ means adding $\Delta_i$ to all the elements of the
% sequence $\varsigma_i$. This can be done using $O(1)$ word
% operations, $x\ \AND\ M_i$ to isolate the bits of $x$ in positions
% $\varsigma_i$, right shift the result by $\Delta_i$, and 'or' it
% back into $x$.
%
%There is no need to keep track of updates to $I$ or the permutations
%to $x$ during query processing. However, this is needed during the
%preprocessing stage so as to compute $D(I)$.

\begin{figure*}[htbp]
\centering
\includegraphics[scale=0.5]{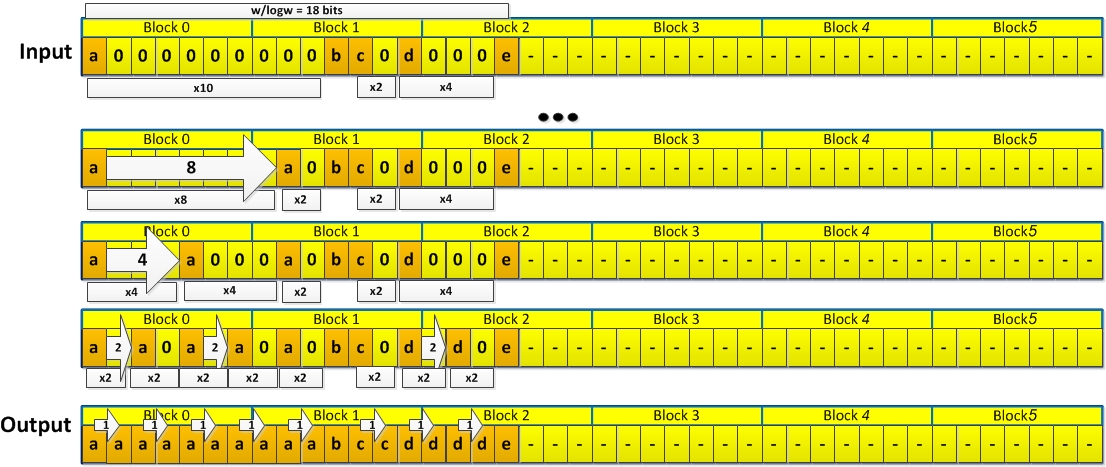}
\caption{An illustration of Phase 6. \label{fig:phase6-full}}
\end{figure*}

\subsection{Phase 7: Final Positioning (Figure \ref{fig:phase7-full})} At this stage we need permute the
bits $x[1, \ldots, k]$, so as to get the final output. Note that
$I_6$ is a permutation and it's inverse permutation, $I_6^{-1}$ is
the permutation we need to apply to $x$. This too requires
simulation of a Benes network, see Section \ref{subsec:benes-full}.

\begin{figure*}[htbp]
\centering
\includegraphics[scale=0.5]{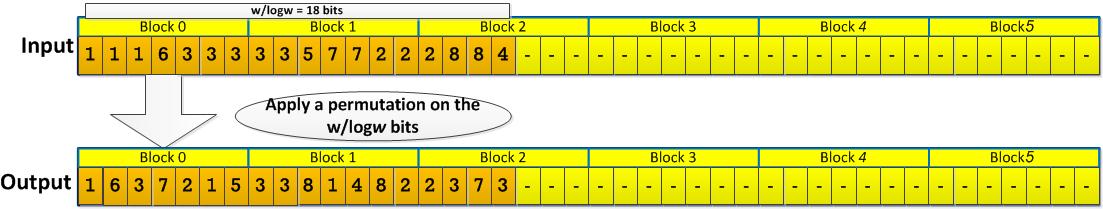}
\caption{An illustration of Phase 7. \label{fig:phase7-full}}
\end{figure*}

\subsection{Permuting elements in a word
by simulating a Benes network} \label{subsec:benes-full}

We show how to prepare a set $C$ of $O(1)$ words such that given $C$
a Benes network implementing a given permutation  $\sigma$ can be
applied to a word $x$ in $O(\log w)$ operations (shift, and, or).

We use such networks in two contexts:
\begin{itemize}
    \item To permute the  $b\leq w/\log w$ leftmost bits of
    $x$. We need this  in Phases 5 and  7 of bit selection.
    \item To permute $w/\log w$ blocks of
    bits (each block of length $\log w$). We need this during Phase 2
    of bit selection.
\end{itemize}

\begin{figure*}[htbp]
\centering
\includegraphics[scale=0.6]{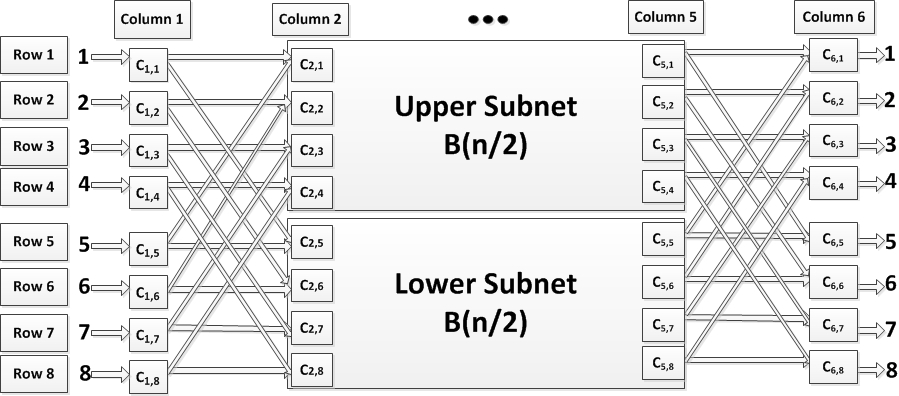}
\caption{One of many variants for the Benes network.
\label{fig:benes-full}}
\end{figure*}

\subsubsection{Overview of the Benes-Network}
\def\uu{\mbox{\rm u}}
\def\d{\mbox{\rm d}}
\def\Dir{\mbox{\rm Dir}}

Assume that $n$ is a power of $2$. A Benes network, $B(n)$, of size
$n$ consists of two Benes networks of size $n/2$, $B_{\uu}(n/2)$,
and $B_{{\d}}(n/2)$. For $1\le i\le n/2$, inputs $i$ and  $i+n/2$ of
$B(n)$ can be routed  to the $i$th input of  $B_{\uu}(n/2)$ or to
the $i$th input of $B_{\d}(n/2)$. The outputs are connected
similarly. For every $1\le i\le n/2$ we define inputs $i$ and
$i+n/2$ as {\em mates}, analogously we define outputs $i$ and
$i+n/2$ as {\em mates}. Note that mates cannot both be routed to the
same subnetwork. See Figure \ref{fig:benes-full}.

\noindent{\bf The looping algorithm}: A Benes network can realize
any permutation $\sigma$ of its inputs as follows. We start with
 an arbitrary input, say $1$, and route it to $B_{\uu}(n/2)$. This
 implies that the output $\sigma(1)$ is also routed to
 $B_{\uu}(n/2)$. The mate $o$ of $\sigma(1)$ must then be routed to
 $B_{\d}(n/2)$. This implies that $\sigma^{-1}(o)$ is routed to
$B_{\d}(n/2)$. If the mate of $\sigma^{-1}(o)$ is $1$ we ``completed
a cycle'' and we start again with an arbitrary input which we
haven't routed yet. Otherwise, if the mate of $\sigma^{-1}(o)$ is
not $1$ then we route this mate to $B_{\uu}(n/2)$ and repeat the
process.

\noindent{\bf Levels of the Benes network}: If we lay out the Benes
network then the 1st level of the recursion above gives us 2
``stages" consisting of $n$ $2 \times 2$ switches, stage $\#1$
connected to the inputs to $B_{\uu}(n/2)$ and $B_{\d}(n/2)$, and
stage $\#$ $2\log n -1$ connecting $B_{\uu}(n/2)$ and $B_{\d}(n/2)$
to the outputs. Opening the recursion gives us $2 \log n -1$ stages,
each consisting of $n$ $2\times 2$ switches.

To implement any specific permutation, one needs to set each of
these switches.

\subsubsection{Permuting the $b=w/\log w$ leftmost bits of the word}
\label{subsubsec:bitperm-full}
 We now  describe an $O(1)$ word representation for
 any permutation $\sigma$ on  $b=w/\log w$
elements that allows us to apply $\sigma$ to the $b$ leftmost bits
of a query word $x$ while doing only $O(\log b)$ operations. We
obtain this data structure by encoding the Benes network for
$\sigma$ in $O(1)$ words. To answer a query we use this encoding to
apply each of the $2 \log (b) -1 = O(\log b)$ stages of the Benes
network for $\sigma$ to the leftmost $b$ bits of $x$. Every stage
requires $O(1)$ operations giving a total of $O(\log b)=O(\log w)$
operations.

%\begin{figure*}[htbp]
%\centering
%\includegraphics[scale=0.6]{BenesNetworkMasks.jpg}
%\caption{Benes network control bits $C_{i,j}$ and $\Dir_{i,j}$.
%\label{fig:benesmasks-full}}
%\end{figure*}

During preprocessing we prepare two $b \times (2 \log b -1)$ binary
matrices $\Dir$ and $C$. Both $\Dir$ and $C$ have $2 b \log b -b
\leq 2 w$ bits, so they can fit into 2 $w$-bit words. The $j$th
column of these matrices correspond to stage $j$ of the Benes
network, the $i$th row of these matrices corresponds to the $i$th
input of the stage. Pictorially, we imagine that inputs are numbered
top-down.

Recall that the mate of input $i$ in stage $j$ is some other input
$i'$ of stage $j$. If $(i'<i)$ we define $\Dir_{i,j}=0$, otherwise,
$(i'>i)$, and we define $\Dir_{i,j}=1$, this is defined for
$i=1,\ldots,b, j= 1,\ldots, 2\log b-1$.

The matrix $C$ is computed as follows: $C_{i,j} =0$ if input $i$ of
stage $j$ routes to input $i$ of stage $j+1$ (i.e. goes
``straight"), and $C_{i,j} =1$ otherwise.

We pack the $b \times (2 \log b -1)$ binary matrix $C$ into 2
$w$-bit words $C_1$, $C_2$ as follows:
\begin{eqnarray*}
  C_1[1,2,\ldots, w] &\leftarrow& C_{1,1}, \ldots, C_{1,b}, C_{2,1},
  \\&& \ldots, C_{2,b},
 \ldots, C_{\log b,1}, \ldots, C_{\log b,b}; \\
C_2[1,2,\ldots, w] &\leftarrow& C_{\log b+1,1}, \ldots, C_{\log
b+1,b}, \\&& \ldots, C_{2\log b -1,1}, \ldots, C_{2\log b-1,b};
\end{eqnarray*}

We pack the matrix $\Dir$ into words $\Dir_1$ and $\Dir_2$
analogously.

\begin{figure*}[htbp]
\centering
\includegraphics[scale=0.55]{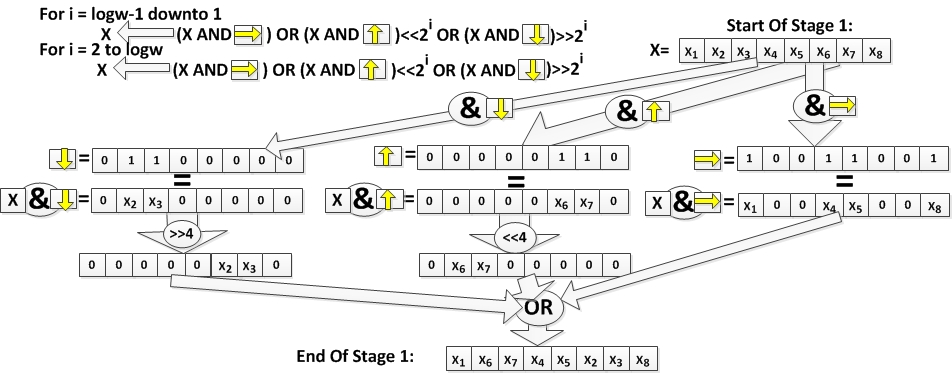}
\caption{Applying stage $1$ of the Benes network to $x$.
\label{fig:BenesNetworkStage-full}}
\end{figure*}

During query processing we apply stage $i$ (for $i=1 \ldots \log b$)
of the Benes network by computing
 \begin{eqnarray*} x &=& \left(x\ \AND\ \overline{C_1[1,\ldots,b]}\right)\\
 &\OR&  \left((x\ \AND\ C_1[1,\ldots,b]\ \AND\ \Dir_1[1,\ldots,b])\gg(b/2^i)\right)\
\\ &\OR& \left((x \ \AND\ C_1[1,\ldots,b] \ \AND\
\overline{\Dir_1[1,\ldots,b]})\ll(b/2^i)\right).\label{eq:benes-full}\end{eqnarray*}
This should be parsed as follows:
\begin{itemize}
  \item $(x \ \AND\ \overline{C_1[1,\ldots,b]})$ gives the bits of $x$ that are
  not going to change position at stage $i$.
  \item $(x \ \AND\ C_1[1,\ldots,b] \ \AND\ \Dir_1[1,\ldots,b])\gg(b/2^i)$ takes the bits of
  $x$ that are to move ``up" at stage $i$ and shifts them accordingly.
  \item $(x \ \AND\ C_1[1,\ldots,b] \ \AND\ \overline{\Dir_1[1,\ldots,b]})\ll(b/2^i))$ takes
  the bits of $x$ that are to move ``down" at stage $i$ and shifts
  them accordingly.
\end{itemize}

In preparation for the next stage we also compute $C_1 = C_1 \ll b$,
$\Dir_1 = \Dir_1 \ll b$ to prepare the control bits for the next
stage of the Benes network.

Analogously, during stages $i=\log b+1,\log b+2,\ldots,2\log b-1$ we
use the words $C_2$ and $\Dir_2$ rather than $C_1$ and $\Dir_1$.

An example of applying stage $1$ of a Benes network of size $8$ is
shown in figure \ref{fig:BenesNetworkStage-full}.

\subsubsection{Permuting the $w/\log w$ leftmost blocks of the word}

To operate the permutation on blocks of bits, we need masks that
replicate the appropriate $C_{i,j}$ and $\Dir_{i,j}$ values $\log w$
times so that they operate upon all bits of the block simultaneously
and not only on one single bit. To precompute and store such
replications in advance requires $O(\log w)$ words of storage, and
we allow, in total, only $O(1)$ words of storage for the entire bit
selection. Thus, we need compute these ``expansions" on the fly, and
in $O(\log w)$ operations.

We now add all-zero columns on the left of matrices $C$ and $\Dir$
so that each of them they have exactly $2 \log w$ columns. This is
well defined because $2 \log b -1 \leq 2 \log w$. Let these new
matrices be $\tilde{C}$ and $\widetilde{\Dir}$. Let
$\widetilde{C}^\mathrm{R}$ be the rightmost $\log w$ columns of
$\widetilde{C}$, and let $\widetilde{C}^\mathrm{L}$ be the leftmost
$\log w$ columns of $\widetilde{C}$. Also, let
$\widetilde{Dir}^\mathrm{R}$, and $\widetilde{\Dir}^\mathrm{L}$ be
defined analogously.

Previously, we packed the $C$ and $Dir$ matrices into words ($C_1$,
$C_2$) and ($\Dir_1$,$\Dir_2$), respectively, column by column. To
perform Block permutations we do so row by row as follows: The
matrix $\widetilde{C}^{\mathrm{L}}$ is packed into the word $C_1$,
row by row. Likewise, $\widetilde{C}^{\mathrm{R}}$ is packed, row by
row, into $C_2$, $\widetilde{\Dir}^{\mathrm{L}}$ into $\Dir_1$ and
$\widetilde{\Dir}^{\mathrm{R}}$ into $\Dir_2$.

The $C$ and $\Dir$ bits associated with stage $j$ of the Benes
network will be spaced out, $\log w$ bits apart. See Figure
\ref{fig:benesmasks3-full}. For $j< \log w$ these bits are in $C_1$
and $\Dir_1$.

\begin{figure*}[htbp]
\centering
\includegraphics[scale=0.6]{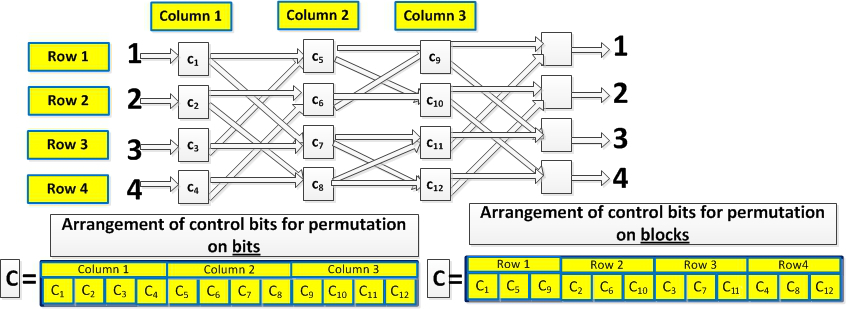}
\caption{Benes network control bits $C_{i,j}$ and $\Dir_{i,j}$.
\label{fig:benesmasks3-full}}
\end{figure*}

Given $C_i$ or $\Dir_i$, we seek to isolate and replicate the bits
associated with stage $1\leq j\leq \log w$. We define a
transformation $g: \{0,1\}^w \times \{1,\ldots, \log w\}$ such that
for any $w$ bit word $z$, and any $1 \leq j \leq \log w$, $g(z,j)$
is a mask such that for any block $B$, all bits of $g(z,j)[B]$ are
equal to $z[B[j]]$.

We compute $g(z,j)$ in $O(1)$ time as follows: Let $Z_j$, $1 \leq j
\leq \log w$, be a bit pattern with $w/\log w$ 1's at the $j$th
index of every block. The operation $y_0=Z_j\ \AND\ z$ isolates the
$j$'th bits of every block in $z$. Let $y_1 = y_0 \ll j-1$ and $y_2=
y_0 \gg (\log w- j)$, let $y_3=y_1-y_2$. Blocks $B$ for which the
bit $z[B[j]]=1$, now have $y_3[B] = 011\ldots1$, blocks $B$ for
which the  bit was zero now have $y_3[B]$ consisting only of zeros.
Finally, set $y_4=y_3\ \OR\ y_1$. Now, all bits of $y_4[B]$ are
equal to the bit $z[B[j]]$.

To simulate the Benes network and sort blocks rather than bits, for
stages $j\leq \log w$ we use the masks $g(C_1,j)$ and $g(\Dir_1,j)$,
analogously to our use of the masks $C_1[1,\ldots,b]$ and
$\Dir_1[1,\ldots,b]$ as used in Equation \ref{eq:benes-full}. For
stages $j>\log w$ we use $g(C_2,j-\log w)$ and $g(\Dir_2,j-\log w)$
analogously to the use of $C_2[1,\ldots,b]$ and
$\Dir_2[1,\ldots,b]$.

Given this transformation, we can simulate the Benes network in
parallel, on entire blocks, and permute blocks at no greater cost
than permuting $bits$.

\end{document}